\newif\ifArxiv
\newcommand{\new}[1]{{\color{black} #1}}
\newcommand{\Iff}{\mathrel{\text{iff}}}
\newtheorem{definition}{Definition}
\newtheorem{remark}{Remark}
\newtheorem{theorem}{Theorem}
\newtheorem{corollary}{Corollary}
\begin{document}

\title{Avoiding Deadlocks Is Not Enough:\\ Analysis and Resolution of Blocked Airplanes}

\author{Shuhao Qi$^{1}$, Zengjie Zhang$^{1}$, Zhiyong Sun$^{2}$ and Sofie Haesaert$^{1}$ \vspace{-1em}
\thanks{This work was supported by the European project SymAware under the grant No. 101070802, the European project COVER under the grant No. 101086228, and the Dutch NWO Veni project CODEC
under grant No. 18244.}
\thanks{$^{1}$ S. Qi, Z. Zhang, S. Haesaert are with the Department of Electrical Engineering, Eindhoven University of Technology, Eindhoven, The Netherlands.
        {\tt\small \{s.qi, z.zhang3, s.haesaert\}@tue.nl}}
\thanks{$^{2}$Z. Sun is with the College of Engineering, Peking University, Beijing, China. {\tt\small \{zhiyong.sun@pku.edu.cn\}}}
}

\maketitle

\begin{abstract}
This paper is devoted to the analysis and resolution of a pathological phenomenon in airplane encounters called \textit{blocking mode}. As autonomy in airplane systems increases, a pathological phenomenon can be observed in two-aircraft encounter scenarios, where airplanes stick together and fly in parallel for an extended period. This parallel flight results in a temporary blocking that significantly delays progress. In contrast to widely studied \textit{deadlocks} in multi-robot systems, such transient blocking is often overlooked in existing literature. Since such prolonged parallel flying places high-speed airplanes at elevated risks of near-miss collisions, encounter conflicts must be resolved as quickly as possible in the context of aviation. We develop a mathematical model for a two-airplane encounter system that replicates this blocking phenomenon. Using this model, we analyze the conditions under which blocking occurs, quantify the duration of the blocking period, and demonstrate that the blocking condition is significantly less restrictive than that of deadlock. Based on these analytical insights, we propose an intention-aware strategy with an adaptive priority mechanism that enables efficient resolution of ongoing blocking phenomena while also incidentally eliminating deadlocks. Notably, the developed strategy does not rely on central coordination and communications that can be unreliable in harsh situations. The analytical findings and the proposed resolution strategy are validated through extensive simulations.
\end{abstract}

% \begin{NtP}
% This paper is motivated by a pathological parallel-flying phenomenon observed in two-airplane encounter scenarios that was first identified by the aviation community during tests of detect-and-avoid (DAA) systems. Referred to as \textit{blocking mode}, this phenomenon not only delays progress toward target destinations but also increases the risk of near-miss incidents for high-speed airplanes. This paper demonstrates that the blocking phenomenon arises across a range of commonly used controllers, indicating that it should be carefully considered in the design of autonomous airplane systems. Unlike the well-studied deadlock, this critical phenomenon has been largely overlooked in the literature. In this paper, we mathematically characterize this phenomenon and provide a formal analysis of its conditions, and propose a communication-free resolution strategy. Although the analysis is based on simplified models, it offers valuable insights applicable to realistic systems. These insights help engineers recognize and address the potential for blocking in the design of DAA systems and autonomous aircraft. Moreover, the proposed resolution strategy has the potential to serve as a fallback mechanism for airplane encounters in situations where communication is absent or unreliable. Future work will extend this analysis to incorporate more realistic airplane dynamics and more complex multi-aircraft encounter scenarios.
% \end{NtP}

\section{Introduction}

Increasing the autonomy of airplane systems is expected to enhance airspace safety and efficiency~\cite{yang2021autonomous, de2021decentralized}. On-board detect-and-avoid (DAA) systems~\cite{de2023analyzing, nlr24report} have been developed to autonomously detect potential midair collisions and provide avoidance instructions based on local sensor data. As demand for aviation services grows, the increasingly crowded airspace is leading to more frequent airplane encounters~\cite{cohen2021urban, kochenderfer2008comprehensive}. To rigorously assess whether a DAA system can safely handle two-airplane encounters, extensive testing in high-fidelity Monte Carlo simulators~\cite{kochenderfer2008comprehensive,jenie2017safety} is necessary. Recently, the Netherlands Aerospace Centre reported a pathological phenomenon in DAA system tests~\cite{nlr24report} during a two-airplane encounter, where both airplanes repeatedly strive but fail to bypass each other, resulting in extended parallel flying, as shown in Fig.~\ref{fig:blocking_sys}. This pathological phenomenon causes airplanes to deviate from their intended paths and significantly delays target attainment. Moreover, the report~\cite{nlr24report} indicates that under realistic uncertainty, such parallel high-speed flying places airplanes at high risk of near-miss collisions when conflicts are not resolved promptly. Therefore, it is crucial to analyze the underlying reason for inducing this phenomenon in two-airplane encounters and develop a resolution strategy.

\begin{figure}[tb]
    \centering
    \includegraphics[width=0.45\textwidth]{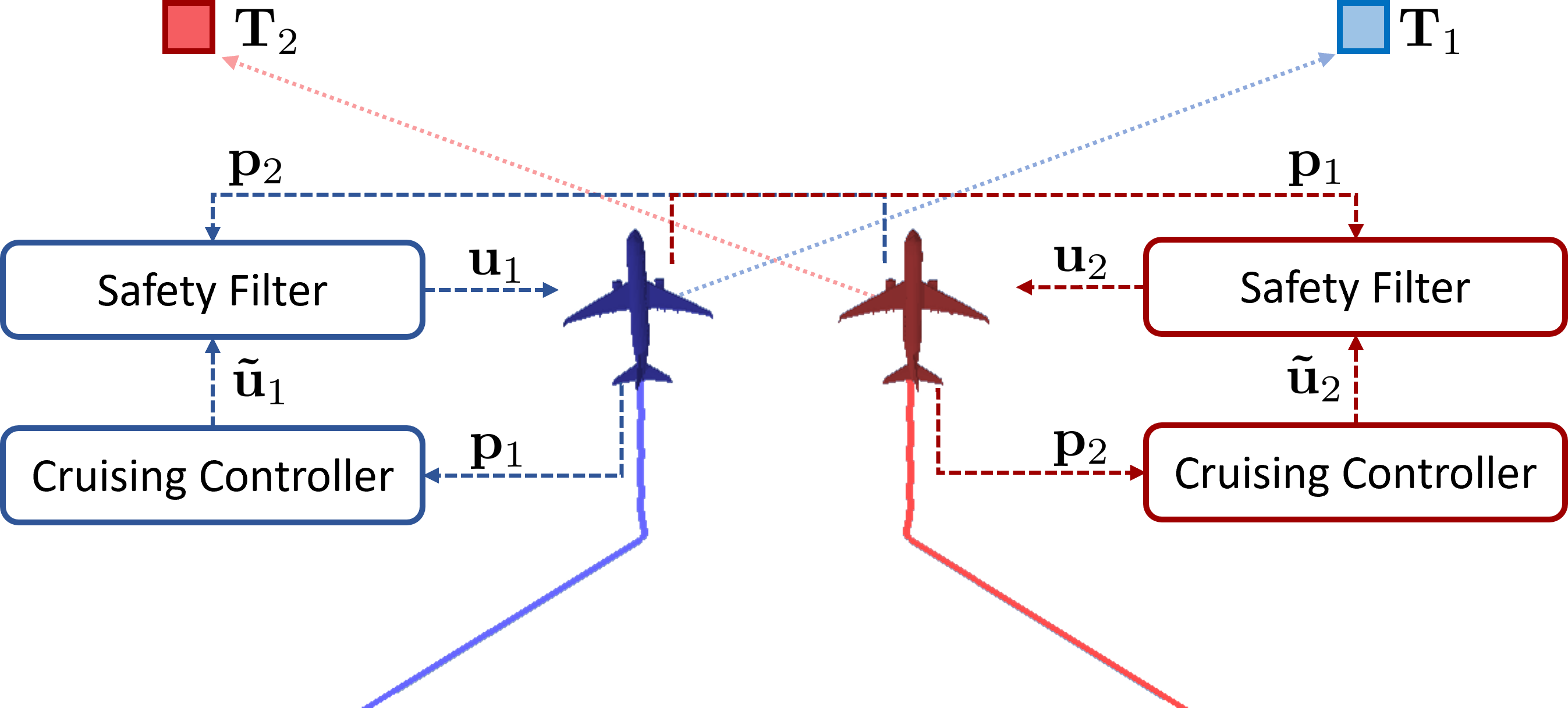}
    \caption{Blocking phenomenon in a two-airplane system. Two airplanes fly toward target positions indicated by square markers. Their trajectories, shown as solid lines, converge to parallel paths as they approach each other.}
    \label{fig:blocking_sys}
\end{figure}

Such pathological parallel flight behavior arises from prolonged, unresolved encounter conflicts. In these situations, the DAA systems of both airplanes are activated to maintain a safe distance, but this compromises progress toward their respective targets. Two typical behaviors in which task completion is compromised are \textit{deadlock} and \textit{livelock}. Deadlock~\cite{jankovic2023multiagent, grover2023before, chen2024deadlock} refers to situations in which agents get stuck in a static configuration and never reach targets, whereas livelock~\cite{abate2009understanding} describes scenarios in which agents keep moving but never reach targets. In contrast to livelock, which is rarely observed, deadlock has been extensively studied in the literature on multi-robot systems. 
Since safety is typically prioritized over task completion, theoretical results have shown that multi-robot systems with decentralized safety-critical controllers, such as control barrier functions (CBFs)~\cite{wang2017safety, grover2023before}, model predictive control (MPC)~\cite{chen2023multi}, and safe-reachable set~\cite{ouyang23cdc} tend to cause deadlocks. From a control-theoretic perspective, deadlock can be viewed as an undesired equilibrium~\cite{jankovic2023multiagent, reis2020control} in which system stability is compromised to prioritize safety. Although deadlocks have been widely studied, most discussions focus on mobile robots, in which each robot can stop to ensure safety. Moreover, deadlocks in such systems occur only under restrictive conditions, such as highly symmetric configurations~\cite{grover2023before, grover2020does}.

The concepts of deadlock and livelock were originally studied in queueing systems within software engineering~\cite{tanenbaum2009modern}. Another relevant concept from queueing systems is \textit{blocking}~\cite{mannucci2021provably, yu2021distributed}, which refers to situations in which the progress of agents is temporarily halted for a finite period. Inspired by this, we refer to the finite-time parallel flying behavior observed in Fig.~\ref{fig:blocking_sys} as \textit{blocking mode} in continuous-state systems. Unlike endless deadlock and livelock, the blocking behavior illustrated in Fig.~\ref{fig:blocking_sys} causes only a finite-time delay in progress and has therefore been largely overlooked in the literature on mobile robots. Although resolving a conflict inevitably delays the agents' progress, the prolonged parallel flight is unnecessary and can be eliminated with a faster resolution. In addition, unlike mobile robots, airplanes must maintain high speeds and cannot stall to ensure safety, making such parallel-flying behavior both dangerous and unacceptable for airplanes. Moreover, the conditions under which it occurs are far less restrictive than those required for deadlock in mobile robots, a claim that will be validated in this paper. Additionally, we tested several common avoidance controllers in airplane encounter scenarios, including the CBF‐based safety filter~\cite{CSM23}, velocity obstacle~\cite{van2008reciprocal}, and potential field~\cite{potential1991}, and observed similar blocking phenomena across all of them. These reasons indicate that blocking is a fundamental issue for airplanes. In this paper, we formally analyze the blocking phenomenon using the CBF‑based safety filter, which naturally aligns with the design principles of DAA systems and offers an analytically tractable formulation with safety guarantees. Analysis of this formulation reveals that blocking typically arises when two airplanes simultaneously choose symmetric avoidance maneuvers under the optimality principle, preventing them from reaching a collaborative resolution. If aircraft can achieve a collaborative resolution, the conflicts can be resolved efficiently. The blocking phenomenon should be explicitly addressed in the design of airplane systems.

Although blocking resolution has not been well studied, existing deadlock-resolution approaches provide promising ways to address the blocking phenomenon. Deadlock resolution methods can be mainly grouped into four categories:  1) Central coordination, in which a central coordinator manages agent interactions to resolve deadlocks~\cite{mo16cdc}. However, such a central coordinator scales poorly with increasing traffic volumes~\cite{tomlin98TAC, pritchett2017negotiated}. 2) Reactive distributed controllers rely on local communication to negotiate with nearby airplanes, using techniques such as parametric control Lyapunov functions~\cite{weng2022convergence} and rotation controllers for position swaps~\cite{grover2023before, Arul2021iros}. However, local communication is vulnerable in practice to time delays, language ambiguity, and disturbances, particularly among different types of aircraft~\cite{master2020}. Given the safety-critical nature of aviation, airplanes must ensure both safety and task accomplishment, even in the absence of communication. 3) Perturbation addition introduces disturbance to break strict deadlock conditions~\cite{wang2017safety}. While perturbation addition is practical, it does not guarantee deadlock resolution and can compromise safety constraints, making it even less suitable for addressing the less restrictive blocking behavior. 4) Predefined rules and priorities, such as right-hand priority rules~\cite{pierson2020weighted}, can be encoded directly into controllers. However, these fixed approaches are vulnerable in uncertain environments and can result in unnecessary energy consumption.

Two recent airplane crashes in the U.S.~\cite{cnn2025jan, cnn2025feb} were caused by delayed communications and the absence of a central controller, highlighting their unreliability. Moreover, fixed rules and priorities are inefficient. This raises a fundamental question: In the absence of communication and central controllers, can airplanes still make adaptive decisions to resolve conflicts? Designing a fully decentralized resolution framework under these harsh conditions remains a significant challenge, even for two-airplane airplane encounters. However, human walkers and drivers always find a quick resolution way in collaboration with their neighbors, even in crowded situations and without communication, by exhibiting their intentions and estimating the intentions of surrounding agents~\cite{mavrogiannis2023core}. Inspired by this insight from social navigation, we propose an intention-aware strategy that generates an adaptive priority of unblocking behavior, which provides guarantees and also improves efficiency. Although the proposed method is protocol-based, it demonstrates the potential for efficient and decentralized resolution without communication, laying the groundwork for future research.

In this paper, vector variables will be represented as bold symbols (e.g., $\mathbf{x} \!\in\! \mathbb R^n$), while scalars will be written as $x\!\in\!\mathbb R$. $\|\cdot\|$ denotes the Euclidean norm of a vector. Furthermore, we define the following angular normalization operator that maps an angle $a \!\in\!\mathbb R$ to the range $[-\pi, \pi)$,
\begin{equation}
    \measuredangle(a) = (a + \pi) \% 2 \pi - \pi,
    \label{eq:operator}
\end{equation}
\noindent where $\%$ is the modulo operator.

\section{Problem Statement}
This paper considers a horizontal two-airplane encounter, a typical setting used to analyze aerial encounters~\cite{kochenderfer2008comprehensive}. \new{We consider a two-airplanes system, denoted as $\mathcal{A} \!=\! \{A_1, A_2\}$ that flies at the same forward speed. The horizontal behavior of an airplane $A_i\!\in\!\mathcal{A}$ is commonly characterized by a unicycle model \cite{sun2021collaborative, zhou2021distributed}}. Given position $\mathbf{p}_i(t)\!=\![p_i^x(t), p_i^y(t)]^T\!\in\!\mathbb{R}^2$ and heading angle $\theta_i(t)$, we define a unicycle model for $A_i$ with a constant forward speed $v\!>\!0$ as 

 	\begin{equation}
 		\begin{bmatrix} \dot{p}_i^x \\ \dot{p}_i^y \\ \dot{\theta}_i \end{bmatrix} = \begin{bmatrix}
 			v \cos (\theta_i)\\
 			v \cos (\theta_i)\\
 			a_i
 		\end{bmatrix},\quad i\in\{1,2\}.
 		\label{eq:uni_sys}
 	\end{equation}
    
Further, we assume that each airplane $A_i$ can accurately observe both its own and the other airplanes' positions and heading angles. Since airline routes consist of a sequence of waypoints connected via airways, the overall route-tracking task can be decomposed into a series of target-reaching problems~\cite{ren2017air}. Therefore, in an encounter scenario, we assume each airplane is tasked with reaching its respective destination point $\mathbf{T}_i$ for $i\!\in\!\{1, 2\}$. Fig.~\ref{fig:cruise} depicts the relevant variables of a two-airplane system. Two airplanes may encounter each other if an intersection point exists between their cruising paths, such as $\mathbf{p}_c$ in Fig.~\ref{fig:cruise}. The distance between two airplanes must satisfy $\|\mathbf{p}_1(t) - \mathbf{p}_2(t)\| \! \geq \! r$ for all $t\!\geq\!0$, where $r$ denotes the safe margin. We assume the initial and target positions of two airplanes satisfy the safety constraints, i.e.,  $\| \mathbf{p}_1(0) \!-\! \mathbf{p}_2(0) \| \! \!\geq\!  r $ and $\| \mathbf{T}_1 \!-\! \mathbf{T}_2 \| \!\geq\!r$.

\begin{figure}[htp]
	\centering
	\includegraphics[width=0.3\textwidth]{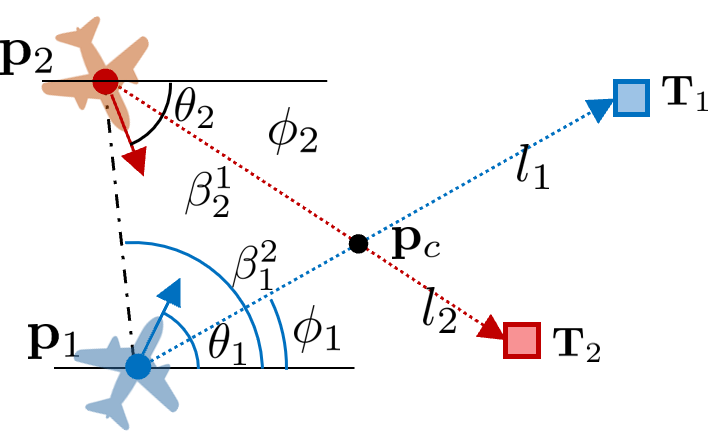}
	\caption{Encounter scenario of a two-airplane system. The solid dots represent the current positions of airplanes, $\mathbf{p}_1$ and $\mathbf{p}_2$, while rectangle markers indicate their target positions $\mathbf{T}_1$ and $\mathbf{T}_2$. For each airplane, three key angles are shown: the heading angle $\theta_{i}$, the bearing angle $\beta_i^j$, and the cruising angle $\phi_i$. Line segments $l_1$ and $l_2$ connect airplanes and their respective targets. The intersection point between $l_1$ and $l_2$ is $\mathbf{p}_c$. } 
	\label{fig:cruise}
\end{figure}

Each airplane is equipped with a cruising controller and a DAA system. The cruising controller is responsible for navigating the airplane $A_i$ toward its target $\mathbf{T}_i$. The DAA system typically monitors for unsafe cruising control inputs and provides safe operational guidance, including heading angles and vertical rates~\cite{nlr24report}, to maintain a safe separation from other airplanes. However, since this paper focuses on horizontal behavior for simplicity, the DAA system in this study only corrects the heading angle. As reported by~\cite{nlr24report}, such a framework can inadvertently lead to prolonged parallel flight, when the cruising heading angles of both airplanes are adjusted by their DAA systems, yet the adjusted safe headings fail to resolve the encounter conflict promptly. To better understand this pathological behavior, this paper aims to formally characterize the parallel-flying phenomenon and analyze its underlying causes. In a two-airplane encounter, conflicts should ideally be resolved quickly if both airplanes can adopt collaborative bypass maneuvers. Since communications and central coordination are unreliable in harsh situations, our goal is to develop a resolution strategy that is both efficient and provably safe, even in the absence of either central coordination or inter-airplane communication.

\section{Modelling blocking}\label{sec:model_blocking}
\new{In this work, for simplicity, we will focus on designing a desired heading angle $\theta^*$ to resolve encounter conflicts. We assume that a high-gain controller $a_i=-k(\theta_i-\theta_i^{*})+\dot \theta_i^{*}$ is implemented to track desired heading angle $\theta^*$, where $k$ is a positive gain large enough such that $\theta_i$ converges to $\theta_i^{*}$ almost instantaneously (i.e., $\theta_i\approx \theta_i^{*}$). Under this setting, model~\eqref{eq:uni_sys} can be simplified to a single integrator with constant speed,
\begin{equation}
    \quad  \mathbf{\dot{p}}_i  =  \textstyle \underbrace{\begin{bmatrix}
         \, v\text{cos}(\theta_i)\\
         \, v\text{sin}(\theta_i)\\
    \end{bmatrix}}_{\mathbf{u}_i}, \ i \in \{1,2\},
    \label{eq:sys}
\end{equation} where $\mathbf{u}_i\!\in\!\mathbb{R}^2$ denotes the horizontal velocity of $A_i$. The following discussion is based on heading angle control for the simplified model in Eq.~\eqref{eq:sys}. This simplification aligns with the operations of DAA systems, which typically provide heading angle adjustments to ensure safety. In this section, we present an analyzable control framework that emulates the functionality of actual airplane systems and replicates the blocking phenomenon, and subsequently provide a formal definition for the blocking phenomenon.}

\subsection{Cruising controller}
The cruising controller for the airplane model in Eq.~\eqref{eq:sys} is defined as follows,
\begin{equation}
    \mathcal{N}(\mathbf{T}_i, \mathbf{p}_i): \tilde{\mathbf{u}}_i = \textstyle      \textstyle v \frac{\mathbf{T}_i - \mathbf{p}_i}{\|\mathbf{T}_i - \mathbf{p}_i\|} = v \begin{bsmallmatrix} \text{cos}(\phi_i)\\ \text{sin}(\phi_i)
\end{bsmallmatrix}, 
\label{eq:u_n}
\end{equation} 
for $i \in \{1, 2\}$ and $ \mathbf{p}_i \neq \mathbf{T}_i$. The angle $\phi_i$ denotes the cruising angle heading toward the target point, as illustrated in Fig.~\ref{fig:cruise}. We say $A_i$ is in \textit{cruising mode} when $\mathbf{u}_i\!=\!\tilde{\mathbf{u}}_i$, i.e., its heading angle equals the cruising angle $\theta_i \!=\! \phi_i$.

\subsection{CBF-based safety filter}

Realistic DAA systems typically rely on complex methodologies, such as dynamic programming and extensive lookup tables~\cite{daa-eurocae, nlr24report}, which pose significant challenges for formal analysis. Therefore, we instead employ an analytically tractable safety filter~\cite{CSM23} for the model in~\eqref{eq:sys}, which emulates the functionality of DAA systems to ensure safety.

\new{From a control-theoretic perspective, the safety property is formally captured by forward invariance~\cite{wang2017safety}. CBFs~\cite{ames2019control} are a convenient approach for enforcing forward invariance in a dynamical system and are defined as follows,
\begin{definition}[Control barrier function, CBF]
Let $\mathcal{C} \subset \mathbb{R}^n$ be a super level set of a continuously differentiable function $h: \mathbb{R}^n \!\rightarrow\!\mathbb{R}$. Then $h$ is a valid CBF that ensures the forward invariance of $\mathcal{C}$ for a nonlinear system $\dot{x}\!=\!f(x,u)$ if there exists a continuous function $\alpha(\cdot):\mathbb{R} \!\rightarrow \!\mathbb{R}$ such that for all $x \!\in\!\mathbb{R}^n$, there exists a
control input $u\!\in\!\mathbb{R}^m$ satisfying: 
$$
\textstyle \alpha(h(x)) + \frac{d }{d x}[h(x)] f(x,u)\geq 0. 
$$ 
where $\alpha(\cdot)$ is strictly monotonically increasing with $\alpha(0)\!=\!0$. 
\end{definition} 
\noindent In this paper, we define the CBF as $h(\mathbf{p}_1, \mathbf{p}_2)\!:=\!\|\mathbf{p}_1-\mathbf{p}_2\|^2 \!-\! r^2$ for the joint two-airplane system and, for simplicity, we use a linear function $\alpha(x)\! = \! \alpha x, \alpha\!\in\!\mathbb R^+$. The CBF conditions for the collision-free set $\textstyle \{ (\mathbf{p}_1, \mathbf{p}_2) \mid h(\mathbf{p}_1, \mathbf{p}_2)\!\geq\! 0\}$ is derived as follows,
\begin{equation}
    \textstyle \alpha h(\mathbf{p}_1, \mathbf{p}_2) + 2(\mathbf{p}_1 - \mathbf{p}_2)^T (\mathbf{u}_1 - \mathbf{u}_2)  \geq 0.
    \label{eq:central_cbf}
\end{equation}
\noindent The inputs $\mathbf{u}_1$ and $\mathbf{u}_2$ satisfying the above condition can ensure the safety. Note that the above condition relies on control inputs of the other airplane, implying a centralized control manner. Referring to~\cite{grover2023before,wang2017safety}, we derive a decentralized CBF condition based on a half-responsibility separation,
\begin{equation}
    g(\mathbf{p}_i, \mathbf{p}_j, \mathbf{u}_i) := \textstyle \frac{\alpha}{2} h(\mathbf{p}_1, \mathbf{p}_2)\! +\! 2(\mathbf{p}_i\!-\!\mathbf{p}_j)^T \! \mathbf{u}_i \geq 0,
    \label{eq:cbf}
\end{equation}
for $i \!\neq\! j$  and  $i,j \!\in\! \{1,2\}$.} The addition of decentralized CBF conditions of two airplanes is equivalent to the centralized condition in Eq.~\eqref{eq:central_cbf}. Using the CBF condition as a safety constraint, a safety filter is formulated as a quadratic programming problem,
\begin{subequations}
\begin{align}
 \mathcal{F}(\mathbf{p}_i, \mathbf{p}_j, \mathbf{\tilde{u}}_i)\!: \ &  \textstyle \underset{\mathbf{u}_i \in \mathbb{R}^2}{\operatorname{argmin}} && \hspace{-5mm} \frac{1}{2}\|\mathbf{u}_i-\mathbf{\tilde{u}}_i \|^2  \label{sf:obj} \\
& \text{ s.t.} && \hspace{-5mm}  \textstyle g(\mathbf{p}_i, \mathbf{p}_j, \mathbf{u}_i) \geq 0 \label{sf:cbf} \\ 
& && \hspace{-5mm} \|\mathbf{u}_i\| = v, 
\label{sf:norm} 
\end{align}
\label{eq:sf}
\end{subequations} \vspace{-2mm}

\noindent for $i \neq j$  and  $i,j \in \{1,2\}$. 
Similar to the principles in DAA systems, the safety filter minimally adjusts them to ensure safety if cruising control inputs are discerned as unsafe; otherwise, it maintains the original cruising control law. In this paper, the safety filter is regarded as being \textit{activated} when $\mathcal{F}(\mathbf{p}_i, \mathbf{p}_j, \mathbf{\tilde{u}}_i) \!\neq\! \mathbf{\tilde{u}}_i$. 

\begin{figure}[t]
  \centering
  \begin{subfigure}[b]{0.5\textwidth}
    \centering
    \includegraphics[width=0.75\textwidth]{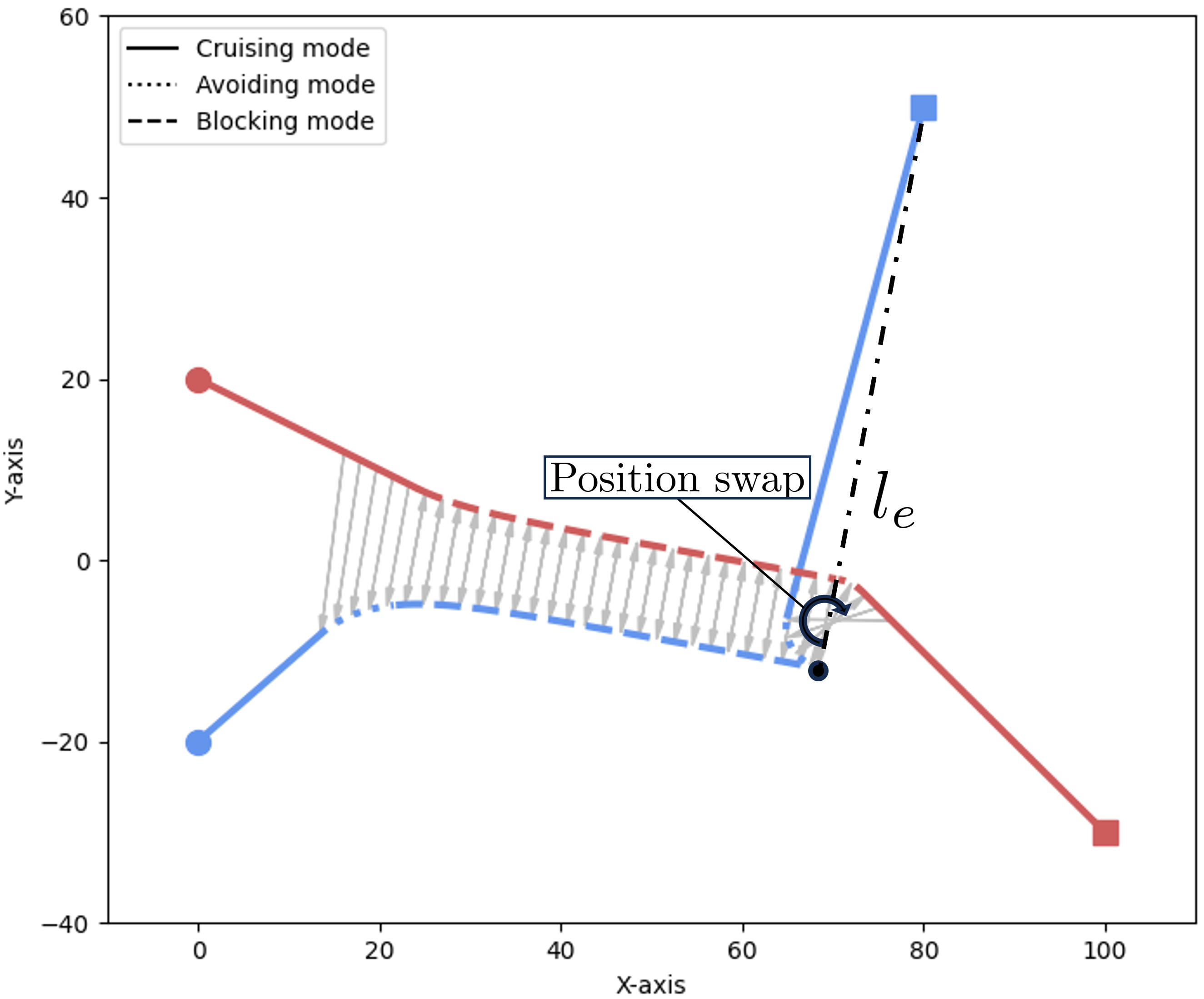}
    \caption{Trajectories of two airplanes in blocking mode. Gray arrows represent the activation of safety filters.}\vspace{2mm}
    \label{fig:blocking_traj}
  \end{subfigure} 

  \begin{subfigure}[b]{0.5\textwidth}
    \centering
    \includegraphics[width=0.9\textwidth]{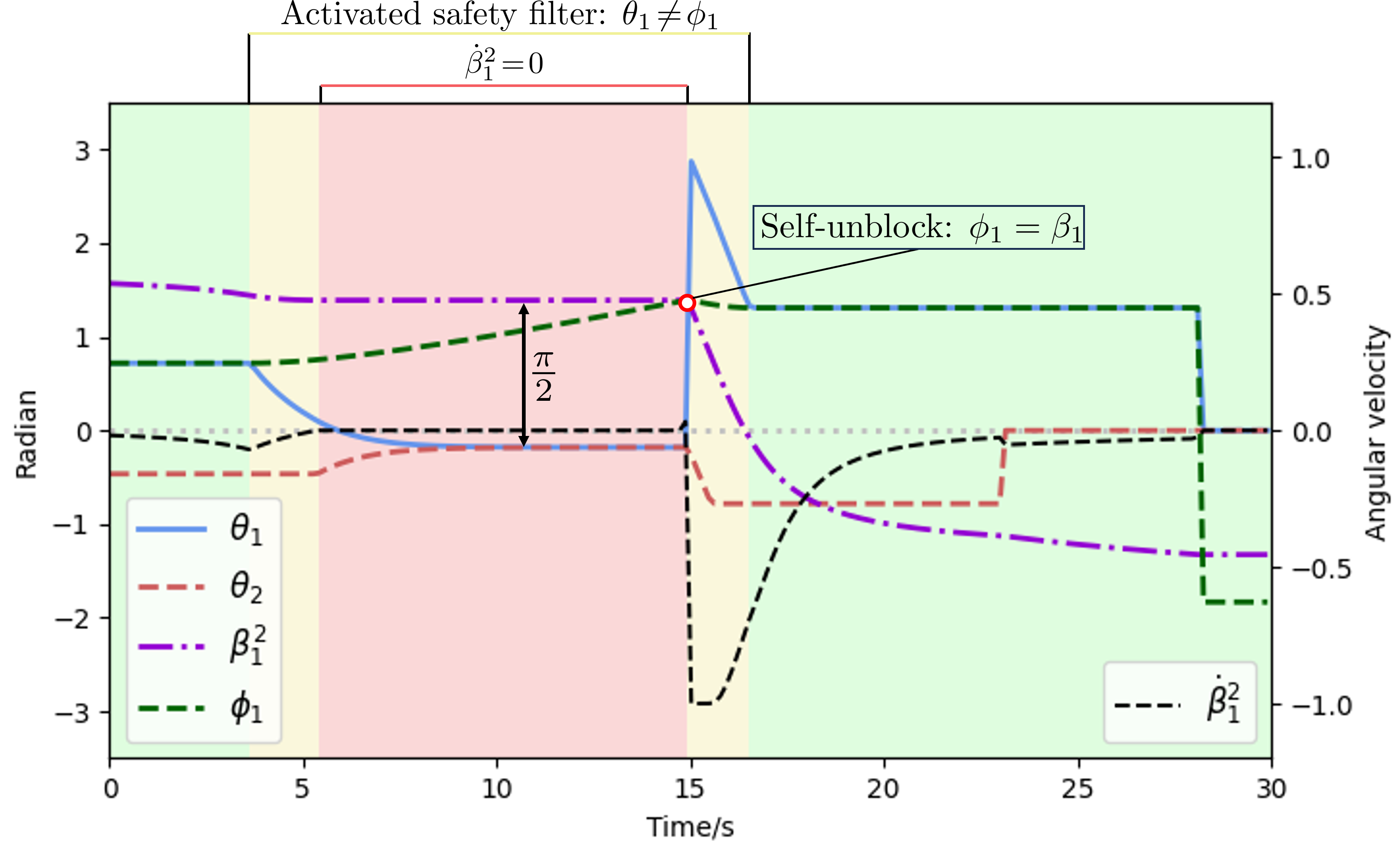}
    \caption{Curves of related variables. Light green, yellow, and red areas denote the cruising, avoiding, and blocking modes of $A_1$, respectively.}
    \label{fig:blocking_curve}
  \end{subfigure}
  \caption{The process of an encounter with a blocking mode.}
  \label{fig:blocking_sim}
\end{figure}

\subsection{Blocking mode}
% In an encounter situation, the safety filters will be activated as airplanes approach the encounter point. In certain situations, the safety filter may sacrifice the target convergence (stability) to ensure safety, resulting in non-convergence pathological phenomena like typical deadlock and livelock.

By assuming the safety filters of two airplanes have uniform parameters of $\alpha$ and $r$, the framework consisting of cruising controllers in~\eqref{eq:u_n} and safety filters in~\eqref{eq:sf} replicates a phenomenon shown in Fig.~\ref{fig:blocking_traj}, similar to the parallel flying observed in~\cite{nlr24report}. In the following, we formally characterize this phenomenon on the basis of the above formulation.
First, we define deadlock and livelock as follows.
\begin{definition}[Deadlock~\cite{grover2023before} and Livelock~\cite{abate2009understanding}]
    An agent with index $i$ is in a deadlock at time $t_0$ if the agent remains stationary at a point different from its target, i.e., $\forall \, t > t_0$, $\mathbf{p}_i(t) = \mathbf{p}_i(t_0) \neq \mathbf{T}_i$. An agent is in a livelock at time $t_0$ if 
    the agent continues to move indefinitely without reaching its target, i.e., $\forall \, t > t_0$, $\mathbf{p}_i(t) \neq \mathbf{T}_i$ and $\dot{\mathbf{p}}_i(t) \neq 0$. 
\end{definition}

\noindent  \new{Let $\beta_i^j$ denote the bearing angle from $A_i$ to $A_j$ as illustrated in Fig.~\ref {fig:cruise}, such that $\textstyle \begin{bsmallmatrix} \text{cos}(\beta_i^j)\\ \text{sin}(\beta_i^j)
\end{bsmallmatrix}\!=\!\frac{\mathbf{p}_j - \mathbf{p}_i}{\|\mathbf{p}_j - \mathbf{p}_i\|}$.
We define the parallel flying phenomenon in Fig.~\ref{fig:blocking_sys} as \textit{blocking mode},}

\new{
\begin{definition}[Blocking mode]\label{def:blocking}
Considering a two-airplane system $\mathcal{A}\!=\!\{A_1, A_2\}$, $A_i$ is in blocking mode at $t_0$ if it deviates from its intended cruising direction to ensure safety by activating the safety filter, i.e.,
$\mathcal{F}(\mathbf{p}_i, \mathbf{p}_j, \mathbf{\tilde{u}}_i)\!\neq\! \tilde{\mathbf{u}}_i$, and maintains a constant bearing relative to another airplane, i.e.,
$\dot{\beta}_i^j\!=\!0$, where $i,j\!\in\!\{1,2\}$ and $i\!\neq\!j$.  
\end{definition}}
\noindent \new{The derivative of $\beta_i^j$ can be calculated by $\dot{\beta}_i^j\!:=\!\frac{ (\mathbf{p}_j - \mathbf{p}_i) \times (\mathbf{u}_j - \mathbf{u}_i)}{\|\mathbf{p}_j - \mathbf{p}_i\|^2}$, which is a scalar, due to the cross product in two-dimensional space.} Contrary to the blocking mode with $\dot{\beta}_i^j \!=\! 0$, $A_i$ is regarded as being in \textit{avoiding mode} if $\mathcal{F}(\mathbf{p}_i, \mathbf{p}_j, \mathbf{\tilde{u}}_i)\!\neq\! \tilde{\mathbf{u}}_i(t)$ and $\dot{\beta}_i^j(t) \neq 0$, this indicates that the airplane performs a rotation around the other airplane. 
Such rotation around each other to swap positions is widely used in conflict resolution~\cite{grover2023before, weng2022convergence, Arul2021iros}.  Therefore, the relative angular velocity $\dot{\beta}_i^j$ is indicative for an effective resolution of the encounter conflict. As shown in Fig.~\ref{fig:blocking_curve} and Fig.~\ref{fig:switch}, an airplane will switch among the three modes in an encounter.

\new{
\begin{remark}
Since most safety controllers function similarly by preventing agents from approaching each other, blocking is not unique to the CBF-based safety filter. It also occurs in commonly used controllers such as velocity obstacle~\cite{van2008reciprocal} and potential fields~\cite{potential1991}, as detailed in \ifArxiv Appendix~\ref{ap:vo}.
\else Appendix A of~\cite{arxiv_version}.\fi 
\end{remark} }

\begin{figure}[t]
    \centering
\includegraphics[width=0.48\textwidth]{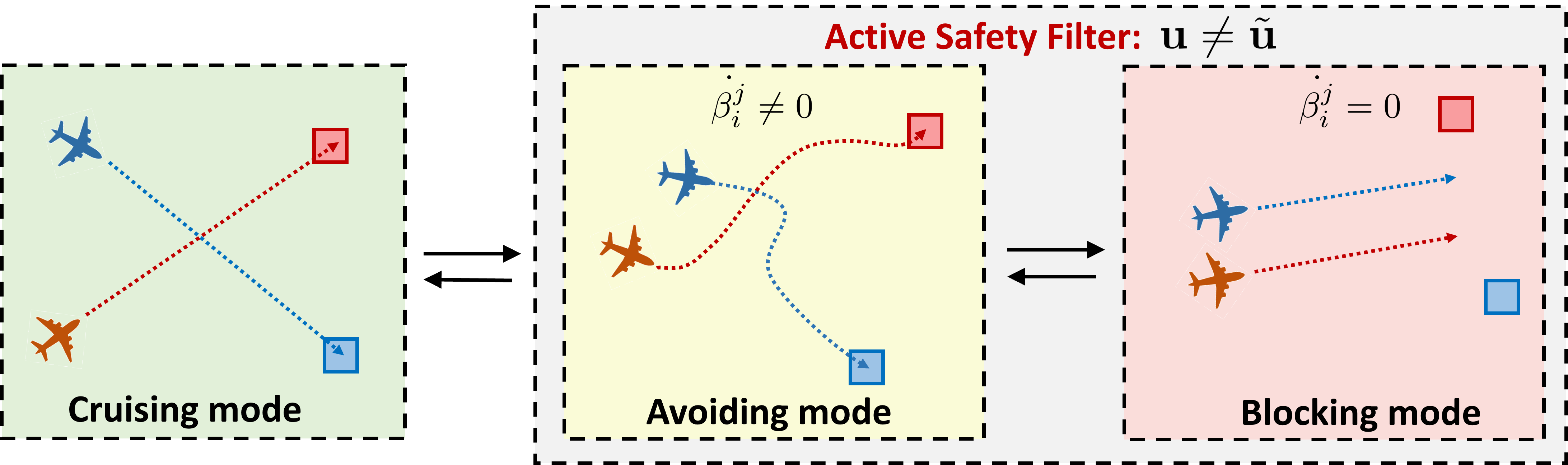}
    \caption{Three modes of an airplane in an encounter.}
    \label{fig:switch}
\end{figure}

% \begin{remark}
% An agent in blocking mode does not imply that another agent is also in blocking mode, as the other agent may be in cruising mode.
% \end{remark}

% \subsection{Problem statement}

% This paper considers a two-airplane system $\mathcal{A} = \{A_1, A_2\}$, each of which is modeled by Eq.~\eqref{eq:sys} with a cruising controller in Eq.\eqref{eq:u_n} and a safety filter in Eq.\eqref{eq:sf}. In this paper, we aim to investigate the cause of the blocking mode defined in Def.~\ref{def:blocking} in an encounter situation defined in Def.~\ref{def:encounter}. Additionally, we aim to develop a strategy that can resolve pathological behaviors, including blocking, deadlock, and livelock. To meet the demands of future air traffic, the resolution strategy should be efficient, guaranteed, and effective without relying on central coordination and communications. 

\section{Analysis of Blocking Mode}~\label{sec:condition}
This section begins by presenting an explicit solution of the safety filter in Eq.~\eqref{eq:sf}, illustrating how safety filters ensure inter-airplane safety during flight. Using the explicit solution, we derive the conditions under which blocking modes occur. This allows us to estimate the blocking duration to quantify its potential impact in advance. 
%In addition, we briefly introduce possible occurrences of deadlock and livelock. These results facilitate the design of resolution methods in the next section.
\subsection{Explicit solutions of safety filter}
When two airplanes are sufficiently far apart, airplanes can select any admissible control inputs, meaning all the inputs in the set $\{\mathbf{u} \!\mid\!\|\mathbf{u}\|\!=\!v\}$ satisfy the CBF condition in Eq.~\eqref{sf:cbf}. This situation is referred to as the \textit{free flight} phase~\cite{tomlin98TAC}. \new{Based on Eq.~\eqref{eq:cbf}, the distance condition for the free flight phase is }
$$
 \|\mathbf{p}_2 - \mathbf{p}_1\| > \frac{2v}{\alpha} + \sqrt{\frac{4v^2}{\alpha^2} + r^2}.
$$

\noindent When the airplanes come closer, not all heading angles remain safe. Fig.~\ref{fig:solution} illustrates how the safety filter functions. The green arc denotes the set of safe heading angles, while the red and yellow arcs represent the set of unsafe heading angles. The red and yellow arcs indicate the negative and the positive directions, respectively. They are symmetrically positioned relative to the line between two airplanes, $l_a$, each encompassing an identical angular range denoted as $\Delta$. Hence, the unsafe heading set is characterized by $\{\theta_i \mid\|\measuredangle(\theta_i-\beta_i^j)\|\!<\!\Delta\}$. In the free flight phase, $\Delta\!=\!0$. As the airplanes approach each other, $\Delta$ increases and gradually converges to $\frac{\pi}{2}$, as illustrated in Fig.~\ref{fig:blocking_curve}. Due to the minimal interference manner, the safety filter will correct to the nearest boundary point in the safe set, i.e., $\beta^j_i\!+\!\Delta$ or $\beta_i^j\!- \!\Delta$ if the cruising angle $\phi_i$ is unsafe. The following explicit solution fully captures the function of safety filters,
\begin{theorem}
Considering a two-airplane system $\mathcal{A}\! =\! \{A_1, A_2\}$ with dynamics in Eq.~\eqref{eq:sys} and cruising controllers in Eq.~\eqref{eq:u_n}, the explicit solution to the safety filter of $A_i$ in Eq.~\eqref{eq:sf} is as follows,
\begin{equation}
\begin{aligned}
      \mathbf{u}_i & = v[\text{cos}(\theta_i), \text{sin}(\theta_i)]^T \\
%      \theta_i &= \measuredangle(\tilde{\theta}_i)\\
%      \tilde
      {\theta}_i &= 
    \begin{cases}
      \beta_i^j \pm \Delta, \, & \phi_i = \beta_i^j \\
    \beta_i^j - \Delta, & \measuredangle(\phi_i - \beta_i^j) \in (-\Delta, 0)  \\
    \beta_i^j + \Delta, & \measuredangle(\phi_i - \beta_i^j) \in (0, \Delta)  \\
    \phi_i, & otherwise\\
    \end{cases}
\end{aligned}
\label{eq:sol}
\end{equation} 
where 
%$\measuredangle(\cdot)$ denotes the angular normalization operation defined in Eq.~\eqref{eq:operator}, and 
$\Delta\!=\! \text{cos}^{-1} \left(\text{min}(1, \frac{\alpha h(\mathbf{p}_1, \mathbf{p}_2)}{4v \|\mathbf{p}_1 -\mathbf{p}_2\|}) \right)\! \in\! [0, \frac{\pi}{2}]$, and $i,j \!=\! \{1,2\}$ with $i\! \neq\! j$. 
\label{th:sol}
\end{theorem}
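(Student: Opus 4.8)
The plan is to use the norm constraint \eqref{sf:norm} to collapse the quadratic program into a one-dimensional minimization over the heading angle $\theta_i$, and then identify the solution as the angular projection of the cruising angle $\phi_i$ onto the feasible arc of safe headings. First I would substitute $\mathbf{u}_i = v[\cos\theta_i,\sin\theta_i]^T$, which satisfies \eqref{sf:norm} automatically, so that both the constraint and the objective become functions of the single scalar $\theta_i$. Using the bearing definition $\mathbf{p}_j-\mathbf{p}_i = \|\mathbf{p}_1-\mathbf{p}_2\|[\cos\beta_i^j,\sin\beta_i^j]^T$, the inner product in \eqref{eq:cbf} evaluates to $(\mathbf{p}_i-\mathbf{p}_j)^T\mathbf{u}_i = -v\|\mathbf{p}_1-\mathbf{p}_2\|\cos(\theta_i-\beta_i^j)$, so the constraint $g\ge 0$ rearranges to $\cos(\theta_i-\beta_i^j)\le \frac{\alpha h(\mathbf{p}_1,\mathbf{p}_2)}{4v\|\mathbf{p}_1-\mathbf{p}_2\|}$. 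Since $\Delta$ is exactly the arccosine of the right-hand side clipped to $[0,\tfrac{\pi}{2}]$ by the $\min(1,\cdot)$, this is equivalent to $|\measuredangle(\theta_i-\beta_i^j)|\ge\Delta$; that is, the feasible headings form the closed arc complementary to the open unsafe wedge of half-width $\Delta$ centered at $\beta_i^j$. The free-flight characterization falls out as the special case in which the fraction exceeds $1$, forcing $\Delta=0$ and rendering the whole circle feasible.

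Next I would rewrite the objective. Because $\|\mathbf{u}_i\|=\|\tilde{\mathbf{u}}_i\|=v$, expanding gives $\tfrac12\|\mathbf{u}_i-\tilde{\mathbf{u}}_i\|^2 = v^2(1-\cos(\theta_i-\phi_i))$, where $\phi_i$ is the cruising angle from \eqref{eq:u_n}. Minimizing the objective is therefore equivalent to minimizing the angular distance $|\measuredangle(\theta_i-\phi_i)|$, so the program reduces to finding the feasible heading angularly closest to $\phi_i$, i.e. the projection of $\phi_i$ onto the safe arc. Writing $\psi := \measuredangle(\phi_i-\beta_i^j)$, I would split on the position of $\phi_i$ relative to the unsafe wedge $(-\Delta,\Delta)$. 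If $|\psi|\ge\Delta$ then $\phi_i$ is already feasible and the unconstrained minimizer $\theta_i=\phi_i$ is optimal, giving the ``otherwise'' case. If $\psi\in(0,\Delta)$, comparing the angular distances from $\phi_i$ to the two boundaries $\beta_i^j\pm\Delta$ shows $\beta_i^j+\Delta$ is the nearer one (distance $\Delta-\psi$ versus $\Delta+\psi$), yielding $\theta_i=\beta_i^j+\Delta$; the case $\psi\in(-\Delta,0)$ is the mirror image giving $\theta_i=\beta_i^j-\Delta$. At the centered point $\psi=0$, i.e. $\phi_i=\beta_i^j$, the two boundaries are equidistant, producing the tie $\theta_i=\beta_i^j\pm\Delta$.

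The computations are elementary, so the main obstacle is bookkeeping rather than depth: I must carry the $2\pi$-periodicity consistently through the normalization operator $\measuredangle$ of \eqref{eq:operator}, confirm that $\min(1,\cdot)$ correctly encodes full-circle feasibility in free flight, and verify that the nearer-boundary selection in each wedge case is the genuine global minimizer over the nonconvex circle rather than a local one. The last point follows because $\Delta\le\tfrac{\pi}{2}$ makes the safe set a single closed arc of angular measure at least $\pi$; when $\phi_i$ lies in the complementary open wedge, the projection of $\phi_i$ onto this arc is attained at the angularly nearer endpoint, so the chosen boundary is globally optimal.
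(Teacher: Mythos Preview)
Your proposal is correct and follows essentially the same route as the paper: parameterize $\mathbf{u}_i$ on the circle to eliminate the norm constraint, rewrite the CBF inequality as $\cos(\theta_i-\beta_i^j)\le L$ and the objective as $1-\cos(\theta_i-\phi_i)$, then reduce to the angular projection of $\phi_i$ onto the complement of the wedge $|\measuredangle(\theta_i-\beta_i^j)|<\Delta$. If anything you are slightly more careful than the paper, which stops at the reformulated problem~\eqref{eq:sf_3} and declares the piecewise solution ``easily obtained'', whereas you actually carry out the nearer-endpoint case analysis and justify global optimality via $\Delta\le\tfrac{\pi}{2}$.
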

\begin{proof}\renewcommand{\qedsymbol}{}
The proof are provided in \ifArxiv Appendix~\ref{ap:thm_1}.
\else Appendix B of~\cite{arxiv_version}.\fi
\end{proof}

\begin{figure}[tb]
    \centering
    \includegraphics[width=0.34\textwidth]{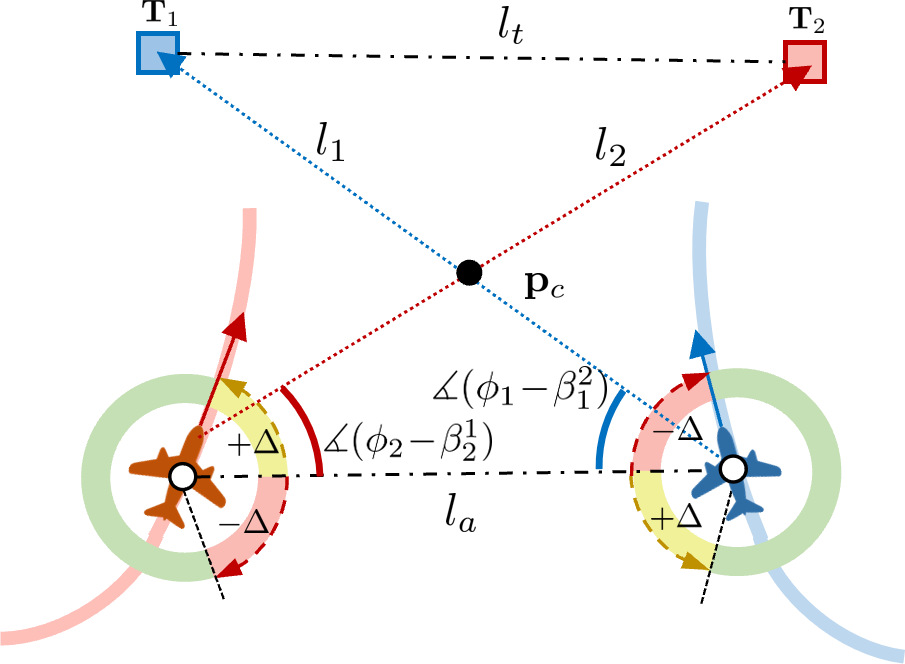}
    \caption{Sketch of explicit solution for the safety filter. \new{For each $A_i$, the green arc denotes the set of safe heading angles $\theta_i$, while the red and yellow arcs represent the set of unsafe heading angles. } The line segment $l_a$ connects $\mathbf{p}_1$ and $\mathbf{p}_2$, while $l_t$ connects $\mathbf{T}_1$ and $\mathbf{T}_2$. As for $i=1,2$, $l_i$ connects $\mathbf{p}_i$ and $\mathbf{T}_i$. $\mathbf{p}_c$ is the intersection point of $l_1$ and $l_2$.}
    \label{fig:solution}
\end{figure}

\subsection{Blocking condition}

According to Def.~\ref{def:blocking}, two necessary characteristics of the blocking mode include: a zero derivative of the bearing angle and an active safety filter. Given that the safety filter ensures the inter-airplane distance $||\mathbf{p}_1 - \mathbf{p}_2|| \!\geq\! r$ remains safe during flight, the derivative of the bearing angle, defined as $\dot{\beta}_i^j \!:=\! \frac{ (\mathbf{p}_j - \mathbf{p}_i) \times (\mathbf{u}_j - \mathbf{u}_i)}{\|\mathbf{p}_j - \mathbf{p}_i\|^2}$, equals zero only when the relative velocity $(\mathbf{u}_j - \mathbf{u}_i)$ is parallel to the relative position vector $(\mathbf{p}_1 - \mathbf{p}_2)$ or when the velocities are identical, i.e., $\mathbf{u}_j = \mathbf{u}_i$. Building upon the activation conditions of a safety filter presented in Theorem~\ref{th:sol}, we can now formalize the situations under which blocking mode occurs.
\begin{theorem} %[Condition in the blocking mode]
Consider a two-airplane system $\mathcal{A} = \{A_1, A_2\}$ where each airplane is equipped with a safety filter~\eqref{eq:sf}. For $i,j \!\in\!\{1, 2\}$ and $i\!\neq\!j$, 
\begin{itemize}
    \item[1)] Both $A_i$ and $A_j$ are in blocking mode (cf., Def.~\ref{def:blocking}) $\Iff$ $\exists s \!\in\! \{-1, 1\}$, $s\measuredangle(\phi_i-\beta_i^j) \in [0, \Delta)$ and $-s\measuredangle(\phi_j-\beta_j^i) \in  [0, \Delta)$;
    \item[2)] $A_i$ is in blocking mode and $A_j$ is in cruising mode $\Iff$ $\exists s\!\in\! \{-1, 1\}$, $s \measuredangle(\phi_i\!-\!\beta_i^j)\! \in\! [0, \Delta)$ and $\phi_j\! = \!\measuredangle(\beta_i^j \!+\! s \Delta)$.
\end{itemize}
\label{th:block}
\end{theorem}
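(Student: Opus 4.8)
The plan is to reduce both equivalences to two ingredients that are already available: the explicit safety-filter solution of Theorem~\ref{th:sol}, which fixes each airplane's heading once we know whether its filter is active, together with a single kinematic identity for the bearing rate $\dot\beta_i^j$ written purely in terms of heading angles. First I would record the antipodal bearing relation $\beta_j^i=\measuredangle(\beta_i^j+\pi)$, which follows immediately from $\begin{bsmallmatrix}\cos\beta_i^j\\ \sin\beta_i^j\end{bsmallmatrix}=-\begin{bsmallmatrix}\cos\beta_j^i\\ \sin\beta_j^i\end{bsmallmatrix}$, and the identity $\dot\beta_i^j=\frac{v}{\|\mathbf{p}_j-\mathbf{p}_i\|}\big(\sin(\theta_j-\beta_i^j)-\sin(\theta_i-\beta_i^j)\big)$, obtained by substituting $\mathbf{u}_k=v[\cos\theta_k,\sin\theta_k]^T$ and $\mathbf{p}_j-\mathbf{p}_i=\|\mathbf{p}_j-\mathbf{p}_i\|[\cos\beta_i^j,\sin\beta_i^j]^T$ into the cross-product formula for $\dot\beta_i^j$. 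In words, $\dot\beta_i^j=0$ exactly when the two velocity components orthogonal to the line $l_a$ coincide.

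Next I would recast Theorem~\ref{th:sol} as an activation-with-sign lemma: the filter of $A_i$ is active (so $A_i$ is in blocking or avoiding mode) with corrected heading $\theta_i=\measuredangle(\beta_i^j+s\Delta)$ for a given $s\in\{-1,1\}$ if and only if $s\,\measuredangle(\phi_i-\beta_i^j)\in[0,\Delta)$. The ambiguous endpoint $\phi_i=\beta_i^j$ is precisely the point admitting both signs, matching the first branch of Eq.~\eqref{eq:sol}, while the ``otherwise'' branch is where $\|\measuredangle(\phi_i-\beta_i^j)\|\ge\Delta$ and the filter is inactive. This step is bookkeeping, but must be carried out carefully so that the half-open interval $[0,\Delta)$ and the sign convention line up with the three active cases of Eq.~\eqref{eq:sol}; note also that a nonempty interval forces $\Delta>0$.

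For statement~1 I would substitute the two corrected headings $\theta_i=\measuredangle(\beta_i^j+s_i\Delta)$ and $\theta_j=\measuredangle(\beta_j^i+s_j\Delta)=\measuredangle(\beta_i^j+\pi+s_j\Delta)$ into the identity for $\dot\beta_i^j$. Using $\sin(\theta_i-\beta_i^j)=s_i\sin\Delta$ and $\sin(\theta_j-\beta_i^j)=\sin(\pi+s_j\Delta)=-s_j\sin\Delta$ yields $\dot\beta_i^j=-\tfrac{v\sin\Delta}{\|\mathbf{p}_j-\mathbf{p}_i\|}(s_i+s_j)$. Since blocking requires the filters active, $\Delta>0$ and hence $\sin\Delta>0$, so $\dot\beta_i^j=0\Iff s_j=-s_i$. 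Writing $s:=s_i$ and feeding this back into the activation lemma for each airplane gives exactly $s\measuredangle(\phi_i-\beta_i^j)\in[0,\Delta)$ and $-s\measuredangle(\phi_j-\beta_j^i)\in[0,\Delta)$; both directions follow by reading the chain backwards.

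For statement~2, $A_j$ is in cruising mode, so $\theta_j=\phi_j$, while $A_i$ keeps the active heading $\theta_i=\measuredangle(\beta_i^j+s\Delta)$. Then $\dot\beta_i^j=0$ becomes $\sin(\phi_j-\beta_i^j)=s\sin\Delta$, which has the two roots $\phi_j=\measuredangle(\beta_i^j+s\Delta)$ and $\phi_j=\measuredangle(\beta_i^j+\pi-s\Delta)$. The first is the asserted condition and corresponds to $A_j$ flying parallel on the same heading as $A_i$. The second root gives $\measuredangle(\phi_j-\beta_j^i)=-s\Delta$, i.e.\ $A_j$'s cruising direction lying exactly on the boundary of its own unsafe cone; I would argue this tangency case is excluded, since genuine cruising requires the filter to be strictly inactive (the boundary being a non-generic, non-persistent configuration), which leaves the stated root. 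I expect the main obstacle to be exactly this last step: separating the two sine roots and rigorously discarding the boundary/tangency solution, together with the careful handling of the normalization operator $\measuredangle$ across the antipodal shift $\beta_j^i=\measuredangle(\beta_i^j+\pi)$ so that the half-open activation intervals are preserved. The remaining checks --- that $\Delta>0$ under activation and that each implication is reversible --- are then routine.
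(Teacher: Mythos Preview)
Your approach is essentially the paper's: both expand the cross product defining $\dot\beta_i^j$ into the identity $\sin(\theta_j-\beta_i^j)=\sin(\theta_i-\beta_i^j)$ (the paper writes it as $\sin(\beta_1^2-\theta_2)=\sin(\beta_1^2-\theta_1)$), insert the corrected headings from Theorem~\ref{th:sol}, and read off the sign conditions. Your $(s_i+s_j)$ computation for statement~1 is a tidy repackaging of the paper's argument that $\mathbf{u}_2-\mathbf{u}_1$ comes out collinear with $\mathbf{p}_2-\mathbf{p}_1$; the content is the same.

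On statement~2 you are actually more careful than the paper. You correctly flag the second sine root $\phi_j=\measuredangle(\beta_j^i-s\Delta)$, which places $A_j$'s cruising angle exactly on the boundary $|\measuredangle(\phi_j-\beta_j^i)|=\Delta$ of its unsafe cone. The paper's necessity argument simply writes ``$\theta_2=\measuredangle(\beta_1^2+s\Delta)$'' and picks the first root without commenting on the second. Your proposed resolution---that this tangency is non-generic and non-persistent---is informal, but the paper's own treatment is no more rigorous; since the boundary case sits in the ``otherwise'' branch of Eq.~\eqref{eq:sol} (filter inactive, hence cruising), the obstacle you anticipate is genuine rather than an artifact of your setup, and is the same weak spot the paper leaves unaddressed.
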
 
\begin{proof} \renewcommand{\qedsymbol}{}
\ifArxiv The proof can be found in Appendix~\ref{ap:thm_2}.
\else The proof is provided in Appendix C of~\cite{arxiv_version}.\fi 
\end{proof}

This theorem reveals that blocking occurs when the two airplanes' cruising angles are approximately mirrored. As illustrated in Fig.~\ref{fig:solution}, blocking arises when one airplane’s cruising angle falls within the yellow arc and the other’s lies within the red arc. In these situations, the two airplanes tend to choose mirror-image avoidance directions, because the objective function~\eqref{sf:obj} penalizes large heading deviations. Such individual optimality can lead to a greater overall loss when the two airplanes do not cooperate. If target locations cause the airplanes to approach each other symmetrically, blocking mode becomes inevitable. Consequently, a sufficient condition for blocking is formally stated in Corollary~\ref{th:block_bf}. This corollary can facilitate the advanced detection of an upcoming blocking mode before entering it.
\begin{corollary}
If both airplanes in $\mathcal{A} \!=\! \{A_1, A_2\}$ are in cruising modes, they will enter into the blocking modes defined in Def.~\ref{def:blocking} in the future if the target points $\mathbf{T}_1$ and $\mathbf{T}_2$ make $\measuredangle(\phi_1-\beta_1^2)\! =\! -\measuredangle(\phi_2-\beta_2^1)$ satisfied
and the encounter point $\mathbf{p}_c$ exists and lies equidistant from both airplanes, i.e., $\|\mathbf{p}_2 \! - \! \mathbf{p}_c\|\! =\! \|\mathbf{p}_1 \! - \! \mathbf{p}_c\|$.
\label{th:block_bf} 
\end{corollary}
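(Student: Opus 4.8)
The plan is to reduce the statement to the instantaneous characterization of blocking already furnished by Theorem~\ref{th:block}, part 1). Concretely, I would show that when both airplanes start in cruising mode under the stated hypotheses, the cruising dynamics drive the system, at some finite future time, into a configuration satisfying $s\measuredangle(\phi_i-\beta_i^j)\in[0,\Delta)$ and $-s\measuredangle(\phi_j-\beta_j^i)\in[0,\Delta)$ for a suitable $s\in\{-1,1\}$; by Theorem~\ref{th:block} this is exactly the onset of blocking. The conceptual core is that the two hypotheses force a reflection symmetry of the whole encounter about the perpendicular bisector of the segment $\mathbf{p}_1\mathbf{p}_2$ through $\mathbf{p}_c$, and that this symmetry is invariant under the straight-line, equal-speed cruising motion.

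First I would fix coordinates using the equidistance hypothesis $\|\mathbf{p}_1-\mathbf{p}_c\|=\|\mathbf{p}_2-\mathbf{p}_c\|$: place the perpendicular bisector of $\mathbf{p}_1\mathbf{p}_2$ on the $y$-axis, so that $\mathbf{p}_1=(-a,b)$, $\mathbf{p}_2=(a,b)$ with $a>0$, and $\mathbf{p}_c=(0,c)$. Since $\mathbf{p}_c$ lies on both forward cruising paths, the cruising directions $\mathbf{p}_c-\mathbf{p}_1\propto(a,c-b)$ and $\mathbf{p}_c-\mathbf{p}_2\propto(-a,c-b)$ are mirror images across the $y$-axis, which is precisely the geometric content of hypothesis $\measuredangle(\phi_1-\beta_1^2)=-\measuredangle(\phi_2-\beta_2^1)$; a direct computation then gives $\beta_1^2=0$, $\beta_2^1=\pi$, and $\gamma:=\measuredangle(\phi_1-\beta_1^2)=-\measuredangle(\phi_2-\beta_2^1)$ with $\cos\gamma=a/\sqrt{a^2+(c-b)^2}>0$, hence $|\gamma|<\tfrac{\pi}{2}$. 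I would then verify equivariance: writing the cruising trajectories as straight lines at common speed $v$, the reflection $(x,y)\mapsto(-x,y)$ maps $\mathbf{p}_1(t)$ to $\mathbf{p}_2(t)$ for all $t$, so the configuration stays symmetric, the connecting segment stays horizontal, and the angles $\beta_i^j$ and $\gamma$ remain constant throughout the cruising phase.

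Next I would track the separation and the half-cone $\Delta$. By symmetry $\rho(t):=\|\mathbf{p}_1(t)-\mathbf{p}_2(t)\|=2|p_1^x(t)|$ decreases strictly and monotonically as the airplanes approach $\mathbf{p}_c$. From the formula in Theorem~\ref{th:sol}, $\Delta=\cos^{-1}\!\big(\min(1,\tfrac{\alpha}{4v}(\rho-\tfrac{r^2}{\rho}))\big)$, and since $\rho\mapsto\rho-r^2/\rho$ is increasing on $\rho>0$, the map $\rho\mapsto\Delta$ is strictly decreasing; thus $\Delta(t)$ increases strictly toward $\tfrac{\pi}{2}$ as $\rho\to r$. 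Because the airplanes begin in cruising mode we have $\Delta\le|\gamma|$ initially, so there is a finite time $t^{\ast}$ with $\Delta(t^{\ast})=|\gamma|$, and for $t>t^{\ast}$ we have $\Delta>|\gamma|$: the constant cruising angle falls strictly inside the unsafe cone and both safety filters activate. Taking $s=\operatorname{sign}(\gamma)$ gives $s\measuredangle(\phi_1-\beta_1^2)=|\gamma|\in[0,\Delta)$ and $-s\measuredangle(\phi_2-\beta_2^1)=|\gamma|\in[0,\Delta)$, which is exactly condition 1) of Theorem~\ref{th:block}; hence both airplanes are in blocking mode from $t^{\ast}$ onward.

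The main obstacle I anticipate is the symmetry-preservation step: one must argue carefully that the mirror relation $\measuredangle(\phi_1-\beta_1^2)=-\measuredangle(\phi_2-\beta_2^1)$ is not merely instantaneous but is maintained throughout the cruising phase, which is exactly where the equidistance-plus-equal-speed hypothesis does its work (without it the bearings would drift and the mirror relation would break before the filters engage). A secondary point requiring care is confirming $|\gamma|<\tfrac{\pi}{2}$ from the existence of a forward encounter point $\mathbf{p}_c$, so that the strictly increasing $\Delta$ is guaranteed to overtake $|\gamma|$ at a finite time rather than only asymptotically; and I would note that the defining relation $\dot{\beta}_i^j=0$ at $t^{\ast}$ is already subsumed by the equivalence in Theorem~\ref{th:block}, so it need not be re-derived separately.
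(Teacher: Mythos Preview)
Your proposal is correct and follows essentially the same approach as the paper's proof: both argue that the hypotheses impose a mirror symmetry about the perpendicular bisector through $\mathbf{p}_c$, that this symmetry is preserved under equal-speed cruising, and that the two airplanes therefore trip their safety filters simultaneously with cruising angles falling on opposite sides of $l_a$, so Theorem~\ref{th:block}(1) applies. Your version simply fills in details the paper leaves implicit---the explicit coordinate frame, the invariance of $\beta_i^j$ and $\gamma$ along the straight-line cruise, the monotone growth of $\Delta$ toward $\tfrac{\pi}{2}$, and the bound $|\gamma|<\tfrac{\pi}{2}$ guaranteeing a finite crossing time---so there is no substantive divergence.
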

\begin{proof} \vspace{-4mm}
The proof can be found in Appendix~\ref{ap:cor_1}.
\end{proof}

\subsection{Estimation of blocking duration}

Having discussed how the safety filter causes blocking modes, we aim to estimate their duration. To achieve this, we need to analyze airplane behavior during blocking modes and identify the conditions under which they end.

As shown in Theorem~\ref{th:block}, the conditions for blocking modes depend on the $\phi_i$, $\beta_i^j$, and $\Delta$ for $i\!\in\!\{1,2\}$. Fig.~\ref{fig:blocking_curve} shows the curves of $\phi_1$, $\theta_1$, $\beta_1^2$, $\dot{\beta}_1^2$, and $\theta_2$ during an encounter scenario in which blocking occurs. Because the solution $\mathbf{p}_i$ to the system in Eq.~\eqref{eq:sys} is continuous over time, $\phi_i$, $\beta_i^j$, and $\Delta$, as continuous functions of $\mathbf{p}_i$ and $\mathbf{p}_j$, evolve continuously as well. When an airplane enters a blocking mode, $\dot{\beta}_i^j\!=\!0$ implies that $\beta_i^j$ remains fixed, and $\Delta$ remains fixed or converges to $\frac{\pi}{2}$, as shown in Fig.~\ref{fig:blocking_curve}. Consequently, breaking the blocking condition depends solely on the change of $\phi_i$. As demonstrated in \ifArxiv Appendix~\ref{ap:cor_2}\else Appendix E of~\cite{arxiv_version}\fi, once $A_i$ enters the blocking mode, $\phi_i$ converges to the bearing angle $\beta_i^j$, for all $i,j\!\in\!\{1,2\}$ with $i\!\neq\!j$. 
When airplane $A_i$ first reaches the condition $\phi_i\!=\!\beta_i^j$ before the other, the blocking mode transitions to the avoiding mode at the next moment, as the blocking condition is no longer satisfied. This self-unblocking behavior is shown in Fig.~\ref{fig:blocking_curve} and formalized in the following corollary.

\begin{corollary}[Self-unblocking condition]
 Considering the system setting described in Theorem~\ref{th:block} and an airplane is in blocking mode, the blocking mode ends when $\exists i,j \in \{1,2\}$ and $i \neq j$ such that $\phi_i = \beta_i^j$ and $\phi_j \neq \beta_j^i$. 
\label{th:unblock}
\end{corollary}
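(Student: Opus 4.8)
The plan is to read the self-unblocking statement as identifying the first instant at which the necessary-and-sufficient blocking characterization of Theorem~\ref{th:block} is violated, and to show this violation is triggered precisely by the event $\phi_i = \beta_i^j$. Since an airplane in blocking mode has $\dot{\beta}_i^j = 0$, the bearing $\beta_i^j$ stays constant, and (by the preceding discussion) $\Delta$ is fixed or converges monotonically to $\frac{\pi}{2}$, so $\phi_i$ is the only genuinely time-varying quantity appearing in the conditions of Theorem~\ref{th:block}; the preceding convergence analysis further guarantees $\phi_i \to \beta_i^j$. Hence the blocking condition can only be broken through the evolution of $\phi_i$, and the natural candidate is the first time $t^*$ at which some airplane reaches $\phi_i = \beta_i^j$.

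Next I would fix notation and reduce to the generic sub-case. Assume without loss of generality that $A_1$ is the first to reach $\phi_1 = \beta_1^2$, at time $t^*$, with $\phi_2 \neq \beta_2^1$. By continuity of $\phi_1$ and the convergence $\phi_1 \to \beta_1^2$, for $t$ slightly below $t^*$ the deviation $\measuredangle(\phi_1 - \beta_1^2)$ keeps a fixed sign, say positive; then the explicit solution in Eq.~\eqref{eq:sol} places $A_1$ in its third branch, with deflection $\theta_1 = \beta_1^2 + \Delta$, and condition 1) of Theorem~\ref{th:block} holds with the sign choice $s = 1$, which forces $\measuredangle(\phi_2 - \beta_2^1) \in (-\Delta, 0)$ for $A_2$. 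The key observation is that as $\phi_1$ passes through $\beta_1^2$, the solution in Eq.~\eqref{eq:sol} switches from the third branch to the second branch, so the deflection jumps discontinuously to $\theta_1 = \beta_1^2 - \Delta$; that is, $A_1$ reverses the side on which it avoids $A_2$.

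The finishing argument is to show this reversal violates Theorem~\ref{th:block} for $t > t^*$. After the crossing, $\measuredangle(\phi_1 - \beta_1^2) < 0$, so the only admissible sign in condition 1) becomes $s = -1$, which would require $\measuredangle(\phi_2 - \beta_2^1) \in [0, \Delta)$; but because $\phi_2 \neq \beta_2^1$ and $A_2$ has not yet reached its own crossing, $\measuredangle(\phi_2 - \beta_2^1)$ is still strictly negative, so no sign $s$ satisfies both inequalities and condition 1) fails. Condition 2) also fails near $t^*$, since it would require one airplane in cruising mode, whereas $\Delta$ stays close to $\frac{\pi}{2}$ and both cruising angles remain unsafe, keeping both safety filters active. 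Equivalently, the jump in $\mathbf{u}_1$ while $\mathbf{u}_2$ varies continuously makes the relative velocity $\mathbf{u}_1 - \mathbf{u}_2$ no longer parallel to $\mathbf{p}_1 - \mathbf{p}_2$, so $\dot{\beta}_1^2 \neq 0$, i.e., $A_1$ enters avoiding mode and blocking ends.

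The main obstacle I anticipate is justifying the crossing and the induced discontinuity rigorously: I must rule out that $\phi_1$ merely approaches $\beta_1^2$ asymptotically or touches and retreats, and confirm that once it crosses, the branch switch in Eq.~\eqref{eq:sol} produces a genuine sign reversal of the deflection rather than a removable coincidence. Tied to this is verifying that $\dot{\beta}_1^2 \neq 0$ strictly after the switch, which needs a short computation showing that $\mathbf{u}_1 - \mathbf{u}_2$ acquires a nonzero component transversal to $\mathbf{p}_1 - \mathbf{p}_2$; the hypothesis $\phi_2 \neq \beta_2^1$ is exactly what prevents $A_2$ from flipping simultaneously and thereby re-establishing the mirror-symmetric blocking configuration.
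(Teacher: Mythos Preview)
Your proposal is correct and follows essentially the same route as the paper's proof: both arguments hinge on the branch switch in the explicit solution of Eq.~\eqref{eq:sol} at the instant $\phi_i=\beta_i^j$, noting that $A_j$ (with $\phi_j\neq\beta_j^i$) does not simultaneously flip, so the mirror symmetry underpinning $\dot\beta_i^j=0$ is destroyed. The paper additionally embeds the derivative computation showing $\tfrac{d}{dt}\cos(\beta_i^j-\phi_i)\ge 0$ (hence $\phi_i\to\beta_i^j$) inside the same appendix, whereas you invoke that convergence as already established in the preceding discussion; either organization is fine, and the crossing/finite-time concern you flag as the main obstacle is treated at the same informal level in the paper.
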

\begin{proof}  \renewcommand{\qedsymbol}{}
\ifArxiv The proof is provided in Appendix~\ref{ap:cor_2}.
\else The proof is provided in Appendix E of~\cite{arxiv_version}.\fi 
\end{proof}

\noindent Geometrically, $\phi_i\!=\!\beta_i^j$ implies that $\mathbf{p}_i$, $\mathbf{p}_j$ and $\mathbf{T}_i$ are collinear, as depicted by $l_e$ in Fig.~\ref{fig:blocking_traj}. After that moment, the blocking will be broken. Using this property, we can calculate the upper and lower bounds of blocking mode duration, respectively denoted by $T_{lb}$ and $T_{ub}$, based on the simple geometric relationship, where
\begin{equation}\label{eq:time}
\begin{aligned}
T_{lb} &  \textstyle =  \frac{ \min(\|\mathbf{p}_1-\mathbf{T}_1\| \text{cos}(\beta_1^2-\phi_1), \|\mathbf{p}_2-\mathbf{T}_2\| \text{cos}(\beta_2^1-\phi_2))}{v} \\
T_{ub} &  \textstyle = T_{lb} + \frac{\|\mathbf{p}_1 - \mathbf{p}_2\| - r}{2v}.
\end{aligned}
\end{equation}
\ifArxiv The rigorous derivation is provided in Appendix~\ref{ap:time}. \else The derivation is provided in Appendix F of~\cite{arxiv_version}. \fi

\section{Consequences of Blocking Mode}

Besides the finite-time blocking modes, deadlock and livelock phenomena can occur under specific conditions. In this section, we qualitatively outline the conditions that lead to deadlock and livelock and explain why blocking is more common than deadlock.

\subsection{Deadlock and livelock phenomenon}~\label{ap:dead_live}
As stated in Theoreom~\ref{th:sol}, when $\phi_i = \beta_i^j$, both $\beta_i^j\! -\! \Delta$ and $\beta_i^j\! +\! \Delta$ are feasible solutions, for $i,j \!\in\! \{1,2\}$ and $i\!\neq\!j$. To resolve this ambiguity, we introduce a preferred direction for each airplane $A_i$, encoded by $\lambda_i\!\in\!\{-1,1\}$. This preference specifies that the heading angle is chosen as $\theta_i \!=\! \beta_i^j \!+\! \lambda_i \Delta$ when $\phi_i \!=\! \beta_i^j$. In the subsequent discussion of potential deadlock and livelock, this preferred direction will play a crucial role.

\begin{figure}[ht]
  \centering
  \begin{subfigure}[b]{0.23\textwidth}
    \centering
    \includegraphics[height=4.4cm]{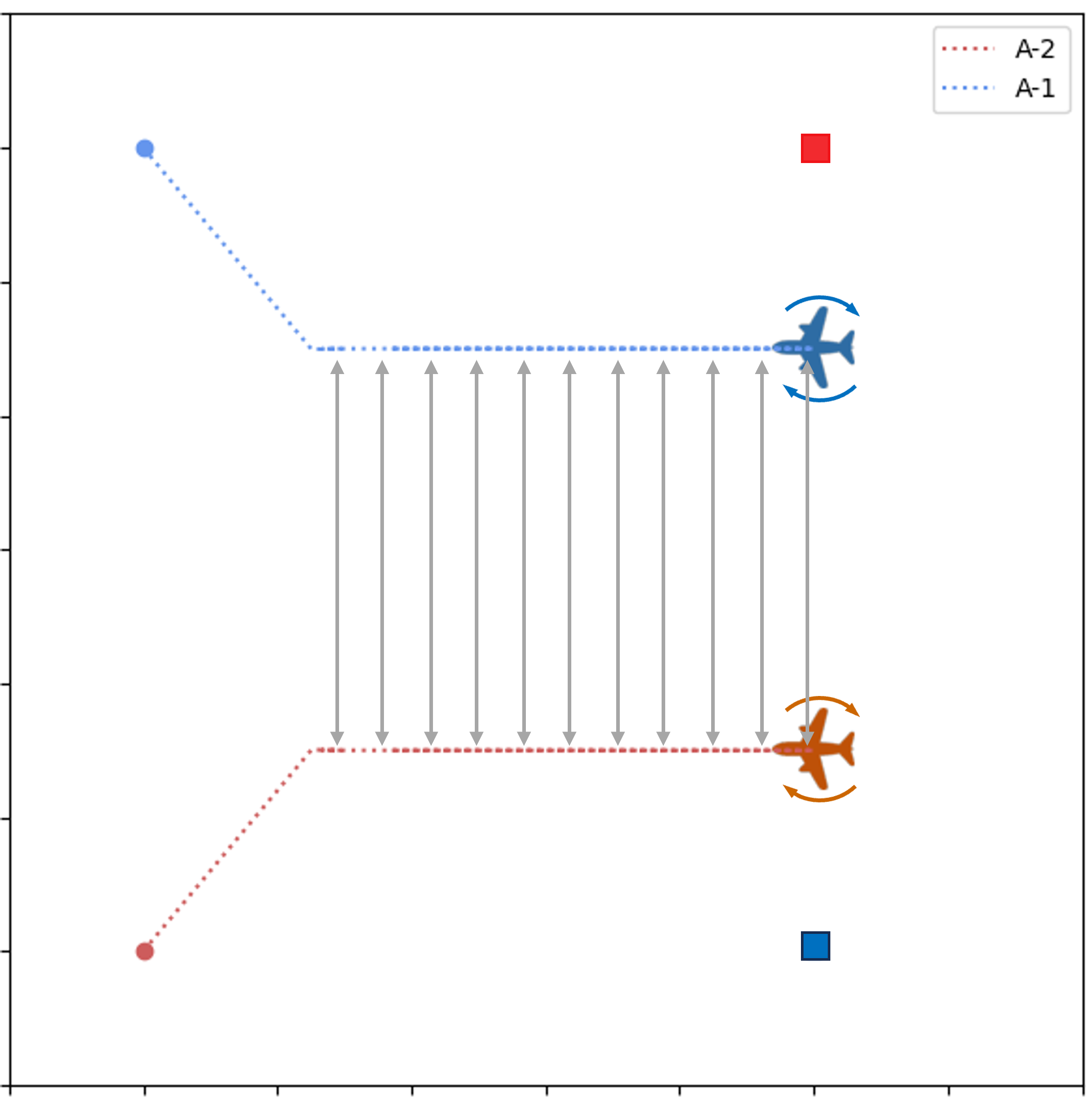}
    \caption{Deadlock: Both airplanes repeatedly reverse their headings but remain fixed in place.}
    \label{fig:deadlock}
  \end{subfigure}\hfill
  \begin{subfigure}[b]{0.23\textwidth}
    \centering
    \includegraphics[height=4.4cm]{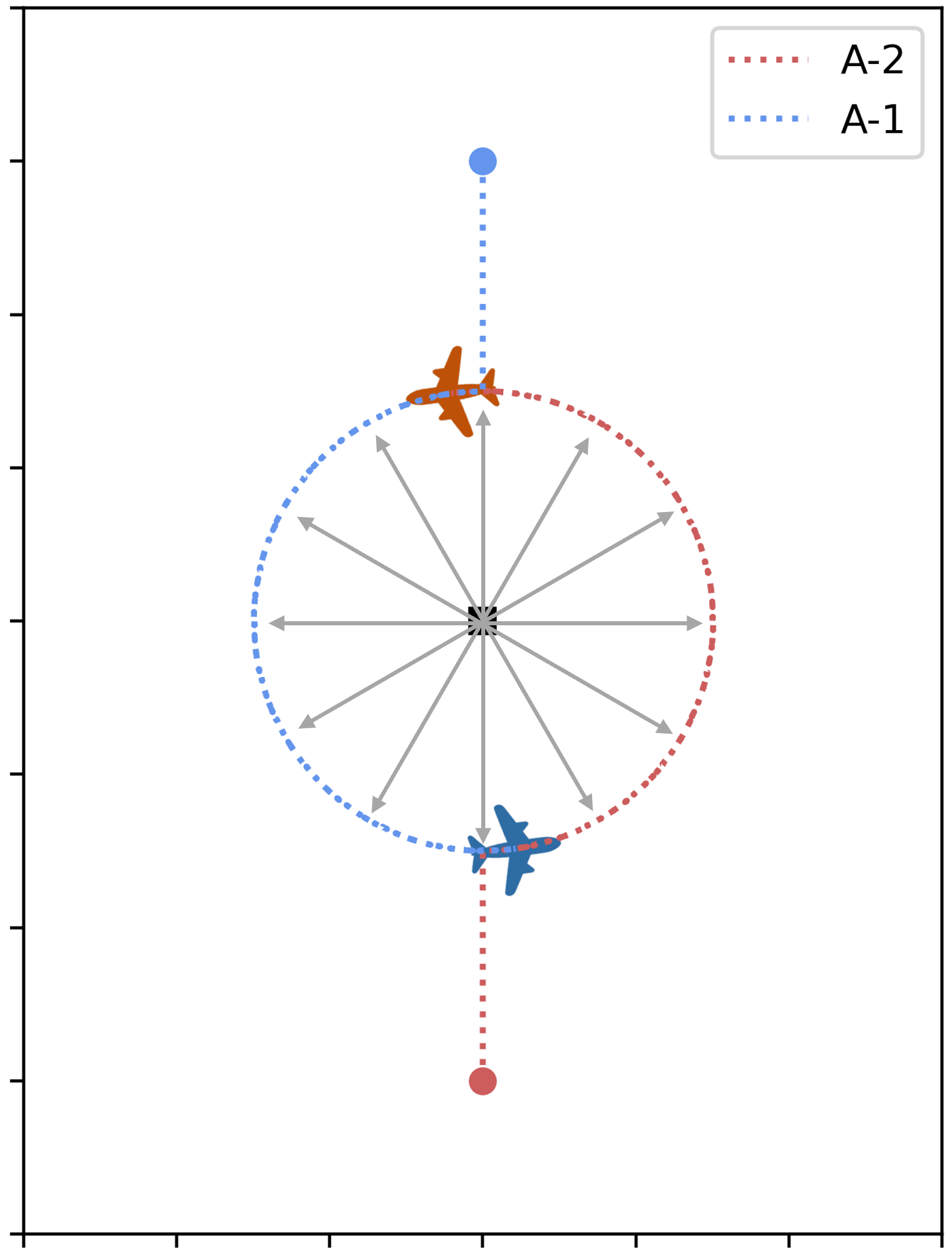}
    \caption{Livelock: Both airplanes circle around their shared target without ever reaching it.}
    \label{fig:livelock}
  \end{subfigure}
  \caption{Deadlock and livelock phenomenon in simulation.}
  \label{fig:dead_live_lock}
\end{figure}

If the preferred actions of airplanes are opposite, expressed as $\lambda_1\!=\!-\lambda_2$, blocking may deteriorate into deadlock, potentially resulting in a mid-air crash. In particular, deadlock occurs when $\phi_1 \!=\! \beta_1^2$ and $\phi_2 \! = \! \beta_2^1$ always hold simultaneously such that the self-unblocking condition in Corollary~\ref{th:unblock} is never satisfied. As shown in Fig.~\ref{fig:deadlock}, blocking mode can deteriorate into a deadlock when two additional conditions are met: 1) the relative position of airplanes is negatively proportional to their target displacement, expressed as $\mathbf{p}_1-\mathbf{p}_2 \!=\! -k (\mathbf{T}_1-\mathbf{T}_2)$, where $k\!>\!0$; 2) $\lambda_1 \!=\! -\lambda_2$. Due to the first condition, both airplanes arrive at the line $l_t$ simultaneously, resulting in the collinearity of $\mathbf{p}_1$, $\mathbf{p}_2$, $\mathbf{T}_1$, and $\mathbf{T}_2$, \new{which also matches the deadlock condition for mobile robots as described in~\cite[Theorem 1]{grover2023before}}. Since the collinearity implies that $\phi_1\!=\!\beta_1^2$ and $\phi_2\!=\!\beta_2^1$, the second condition further causes both airplanes to switch their avoiding directions simultaneously and perpetually maintain this configuration, thereby remaining trapped in a deadlock. Therefore, blocking can deteriorate into deadlock, while resolving the blocking incidentally eliminates deadlocks as well.

\new{If $\lambda_1 \!=\! \lambda_2$, deadlock does not occur. However, a livelock phenomenon may arise when $\lambda_1 \! = \!\lambda_2$ if both airplanes share the same target position.} As depicted in Fig.~\ref{fig:livelock}, airplanes fly around the target endlessly without reaching it when $\mathbf{T}_1\!=\!\mathbf{T}_2$ and $\lambda_1 \!=\! \lambda_2$. In this case, continuous rotation is ineffective in resolving conflicts. Nevertheless, this scenario represents a special type of encounter that typically occurs near airports. Since this paper focuses on mid-air encounters, we assume $\|\mathbf{T}_1 \! - \! \mathbf{T}_2\| \! \geq \! r$ throughout, thereby ruling out livelock.

\subsection{Comparison of blocking and deadlock}~\label{sec:dead_live}
\begin{figure}[th]
    \centering \vspace{-3mm}
\includegraphics[width=0.32\textwidth]{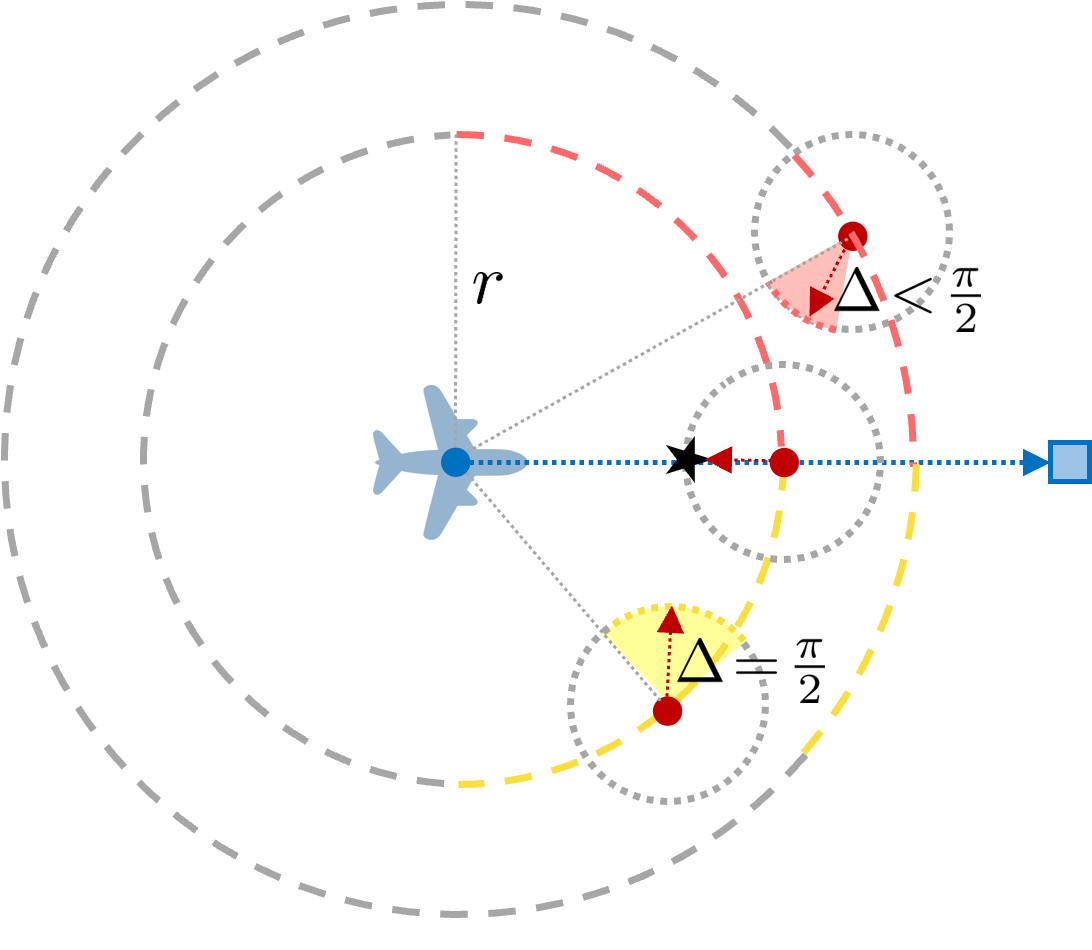}
    \caption{Visualization of blocking and deadlock conditions.}
    \label{fig:condition_compare} 
\end{figure}

To compare the restrictiveness of the deadlock and blocking conditions, Fig.~\ref{fig:condition_compare} illustrates both conditions from the perspective of the ego (blue) airplane. Given an ego airplane flying towards a target, the blocking and deadlock conditions depend on the position and desired cruising angle of the other airplane. The blocking is triggered if the other airplane's position lies along the colored curves and its desired cruising direction falls within the corresponding colored arcs. In terms of~\cite[Theorem 1]{grover2023before}, deadlock for mobile robots is induced when two agents approach each other head-on, as indicated by the star marker. Let us analyze the likelihood of blocking versus deadlock. Assuming airplanes with positions uniformly distributed along a circle with radius $r$ and uniformly distributed $\phi_j$, the probability of a blocking event is $\frac{1}{8}$, while the probability of deadlock is 0. Therefore, airplanes face a greater risk of blocking than of deadlock in an encounter.

\section{Blocking Resolution Framework}~\label{sec:unblock}
In this section, we aim to show that without central coordinators and without communication, airplanes can still resolve blocking modes by collecting enough information to make adaptive decisions instead of following a fixed rule. First, we propose an unblocking maneuver designed to break blocking modes, along with an adaptive-priority rule that selects the airplane to execute the unblocking maneuver based on a duration assessment. Here, it is still assumed that both agents know each other's target positions. Therefore, we also design an interactive maneuver that lets each airplane reveal its intentions and estimate the other's intentions without relying on direct communication.

\subsection{Unblocking maneuver}
The analysis in Sec.~\ref{sec:condition} revealed that both the occurrence and duration of blocking modes depend largely on the cruising angle. Based on the self-unblocking condition in Corollary~\ref{th:unblock}, we propose an unblocking maneuver that temporarily relocates the target point to adjust the cruising angle, together with an adaptive-priority rule that selects which airplane executes the maneuver. This is formalized in Alg.~\ref{alg:unblock} for airplane $A_i$, where $\hat{\mathbf{T}}_i$ denotes the target currently being pursued by $A_i$. For clarity, let $A_i$ denote the ego airplane applying the resolution strategy, and $A_j$ the opponent airplane, where $i, j \!\in\! \{1, 2\}$ and $i \!\neq \!j$. For unblocking, the location of $\hat{\mathbf{T}}_i$ can switch between the desired target $\mathbf{T}_i$ and a temporary target.  In Alg.~\ref{alg:unblock}, when a blocking mode is detected, $A_i$ can unblock by temporarily redirecting its target to the current position $\mathbf{p}_j$ of $A_j$ (line~\ref{step:unblocking}). At the next moment, the self-unblock condition for $A_i$ is triggered instantaneously, and the safety filter immediately steers $A_i$ back in the opposite direction to bypass $A_j$. Once $A_i$ reaches the temporary target (line~\ref{step:reach}), it resumes cruising mode toward its original target point $\mathbf{T}_i$. Since the unblocking maneuver only adjusts the cruising angle within the safety filter~\eqref{eq:sf}, the resulting control input remains provably safe.

However, if both airplanes initiate the unblocking maneuver simultaneously, a new blocking situation may arise. To prevent this, a priority mechanism is required to determine which airplane should perform the unblocking maneuver. Specifically, we implement an adaptive priority mechanism that selects the option with the shortest estimated duration among three options: maintaining the current blocking mode, unblocking by the ego airplane $A_i$, and unblocking by the opponent airplane $A_j$. The total durations required for the airplanes to reach their respective targets under each option are denoted by $T_b$, $T_u^i$, and $T_u^j$, respectively. The details for approximating these durations are provided in Appendix~\ref{ap:priority} based on the estimated blocking time~\eqref{eq:time}. When all airplanes apply the same duration estimation method, they can independently reach a consistent priority decision without direct communication. However, some strictly symmetric situations may lead to $T_u^i \!=\! T_u^j$. In these rare cases, a fixed priority, such as the \textit{right-hand rule}~\cite{pierson2020weighted}, is used to break the symmetry. Compared to applying fixed priorities in all situations, the adaptive priority mechanism reduces inefficiencies and avoids unnecessary energy consumption.

\RestyleAlgo{ruled}
\SetNoFillComment
\SetKwComment{Comment}{//}{}
\SetKwProg{Fn}{Function}{:}{end}
\begin{algorithm}[hbt!]
\caption{Unblocking Manuever for $A_i$}\label{alg:unblock}
\textbf{Initialize:} $\hat{\mathbf{T}}_i \leftarrow \mathbf{T}_i$, $\mathbf{T}_j$\;
\While{$\mathbf{p}_i \neq \mathbf{T}_i$}{
    \textbf{Input:} updated $\mathbf{p}_i$, $\mathbf{p}_j$, $\theta_j$\;
    $\tilde{\mathbf{u}}_i \leftarrow \mathcal{N}(\hat{\mathbf{T}}_i, \mathbf{p}_i)$ in Eq.~\eqref{eq:u_n}\;
    $\mathbf{u}_i \leftarrow \mathcal{F}(\mathbf{p}_i, \mathbf{p}_j, \tilde{\mathbf{u}}_i)$ in Eq.~\eqref{eq:sf}\;

    \If{$A_i$ is in blocking mode (c.f., Def.~\ref{def:blocking})}{
        Compute $T_b$, $T_u^i$ and $T_u^j$\;
            \If{$T_u^i$ = min($T_b$, $T_u^i$, $T_u^j$)}{ 
                $\hat{\mathbf{T}}_i \leftarrow \mathbf{p}_j$  \Comment*[l]{\small Set temporary target}\label{step:unblocking}
            }
    }

    \If{$\mathbf{p}_i = \hat{\mathbf{T}}_i$ and $\hat{\mathbf{T}}_i \neq \mathbf{T}_i$\label{step:reach}}{
        $\hat{\mathbf{T}}_i \leftarrow \mathbf{T}_i$ \Comment*[l]{\small Restore original target}
    }
    \Return $\mathbf{u}_i$\;
}
\end{algorithm}

\subsection{Interactive maneuver for intention estimate}
To collect sufficient information to estimate the target point of the other airplane without communication, the ego airplane can actively provoke the opponent to reveal its intentions. To achieve this, we design an interactive maneuver in which the ego airplane temporarily moves away from the opponent until the opponent’s safety filter deactivates, causing the opponent to revert to its cruising mode and naturally head toward its target. Specifically, as shown in line~\ref{step:interactive} of Alg.~\ref{alg:estimate}, when the airplane $A_i$ initiates an interactive maneuver, the control input $\mathbf{u}_i$ produced by the safety filter is modified by adding $\Delta \mathbf{u} \!=\! k(\mathbf{p}_i - \mathbf{p}_j)$, where $k\!>\!0$. The modified control inputs are provably safe. Regardless of whether one or both airplanes perform this maneuver, both airplanes are driven to revert to cruising mode and reveal their intended heading directions. In the absence of uncertainty, observing at least two distinct poses during these maneuvers is sufficient to estimate the target point using a classical triangulation algorithm~\cite{acc2012triangulation}. Specifically, let $\tilde{\mathbf{T}}_i^j$ represent the target of $A_j$ as estimated by $A_i$, and a set $\mathcal{T}_i^j$ is maintained by $A_i$ to collect poses of $A_j$ (line~\ref{step:append} in Alg.~\ref{alg:estimate}). Once $\mathcal{T}_i^j$ contains a pair of non-collinear poses, denoted by $(\mathbf{p}_j^{1}, \theta_j^{1})$ and $(\mathbf{p}_j^{2}, \theta_j^{2})$, the intersection of the corresponding heading directions is used to estimate the target point. This estimate satisfies 
\begin{equation}~\label{eq:est_target}
     \tilde{\mathbf{T}}_i^j =\mathbf{p}_j^{1} + k_1 \mathbf{d}^{1} = \mathbf{p}_j^{2} + k_2 \mathbf{d}^{2},
\end{equation}
where $\mathbf{d}^{1} \!=\! (\cos\theta_j^{1}, \sin\theta_j^{1})$ and $\mathbf{d}^{2} \!=\! (\cos\theta_j^{2}, \sin\theta_j^{2})$ are the observed cruising directions, and $k_1, k_2 \in \mathbb{R}_{>0}$ are computed accordingly. When more poses are available, the triangulation algorithm in \cite{acc2012triangulation} can be applied for improved accuracy in uncertain environments. By incorporating this estimation procedure into line 6 of Alg.~\ref{alg:unblock}, the resolution strategy becomes applicable even when the target of $A_j$ is initially unknown. Note that this resolution strategy not only resolves the blocking in the control framework defined in Sec.~\ref{sec:model_blocking}, but also holds potential for other control frameworks.

\begin{algorithm}[h]
  \setlength{\topsep}{0pt}% no top blank
  \setlength{\partopsep}{0pt}
\caption{Interactive Maneuver for $A_i$}\label{alg:estimate}
\textbf{Initialize:} $\tilde{\mathbf{T}}_i^j \leftarrow \varnothing$, $\mathcal{T}_i^j \leftarrow \emptyset$\;

    \textbf{Input:} updated $\mathbf{p}_i$, $\mathbf{p}_j$, $\theta_j$\;

    \If{$A_i$ in blocking mode (c.f., Def.~\ref{def:blocking}) and $\tilde{\mathbf{T}}_i^j \!=\! \varnothing$}{
    $\mathbf{u}_i \leftarrow \frac{\mathbf{u}_i + \Delta\mathbf{u}}{\|\mathbf{u}_i + \Delta\mathbf{u} \|}$\label{step:interactive}\Comment*[l]{\small Fly away from $A_j$}
    \Return $\mathbf{u}_i$\;
    }
    \If{$A_i$ is in cruising mode}{Append $(\mathbf{p}_j, \theta_j)$ to $\mathcal{T}_i^j$~\label{step:append}\Comment*[l]{\small Collect poses}}
    \If{$\exists (\mathbf{p}_j^{1}, \theta_j^{1}), (\mathbf{p}_j^{2}, \theta_j^{2}) \!\in\! \mathcal{T}_i^j$ with $\theta_j^{1} \! \neq \! \theta_j^{2}$}
    {     
        $\tilde{\mathbf{T}}_i^j \leftarrow$ Sloving Eq.~\eqref{eq:est_target}\Comment*[l]{\small Estimate $\mathbf{T}_j$} 
    }
\end{algorithm}

\section{Simulation experiments}~\label{sec:sim}
This section validates the efficacy and utility of the proposed resolution strategy. The video demonstration is accessible at the link\footnote{{\small\url{https://youtu.be/r8v7qULA3fM}}}.

% The video demonstration is provided in the supplementary material.

\subsection{Simulation for the resolution strategy}
\begin{figure}[th]
    \centering \vspace{-3mm}
    \includegraphics[width=0.4\textwidth]{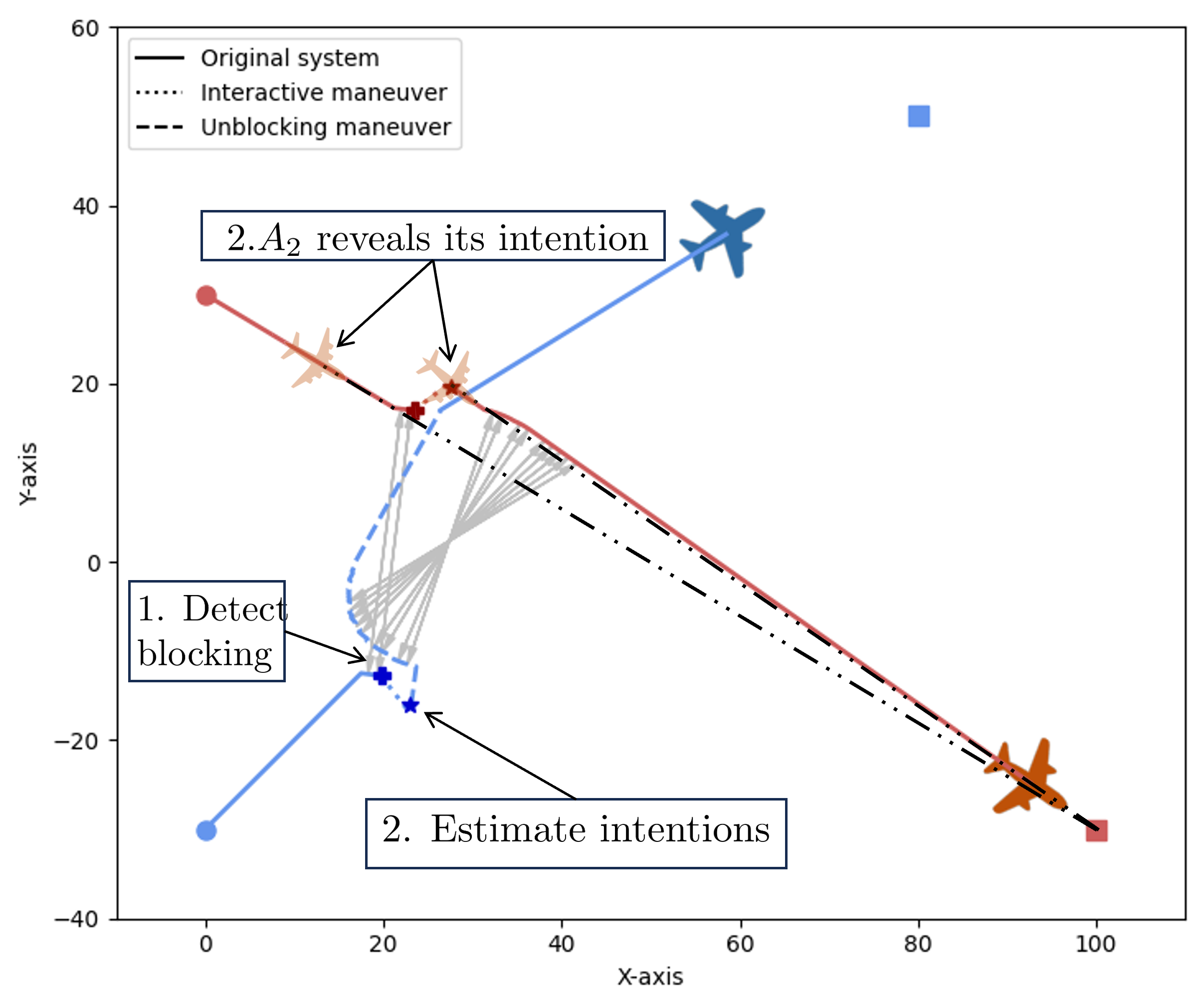}
    \caption{Blocking resolution process using Alg.~\ref{alg:unblock}. Plus markers represent the moment when the blocking mode is detected; star markers represent the moment of successful intention estimation. The intersection point of two black dotted dashed lines is the estimated target of $A_2$ from $A_1$.  
    } 
    \label{fig:unblock_sim} 
\end{figure}

The simulation in Fig.~\ref{fig:unblock_sim} illustrates the unblocking process when two airplanes apply the resolution strategy introduced in Sec.~\ref{sec:unblock}. Airplanes $A_1$ and $A_2$ are initially positioned at $(0, -30)$ and $(0, 30)$, respectively, and are directed toward their target destinations at $\mathbf{T}_1\!=\! (80, 50)$ and $\mathbf{T}_2\! =\!(100, -30)$. The system parameters are set to $v\!=\!5$, $r\!=\!30$, and $\alpha\!=\!3$. First, the two airplanes initiate interactive maneuvers once a blocking mode is detected. Then, as they revert to cruising mode, their intentions are revealed, enabling mutual estimation of each other’s target points. This demonstrates that the intentions of other airplanes can be successfully estimated using their reactions without any communication. Using the estimated target points, $A_1$ is selected to perform an unblocking maneuver based on the time assessment. The original target $\mathbf{T}_1$ is replaced by a temporary target point that can break the ongoing blocking mode and immediately make airplanes revert to an avoiding mode. When $A_1$ reaches the temporary target, the encounter conflict is resolved at this moment, and $A_1$ heads toward the original target. Note that the simulation is based on the single integrator model~\eqref{eq:sys}. To test the strategy on the unicycle model~\eqref{eq:uni_sys} together with an angle-tracking controller,  Fig.~\ref{fig:unblock_large_scale} simulates four airplanes flying toward their respective targets, each sequentially encountering two different airplanes along the way. Each encounter is resolved smoothly, validating the efficacy of the proposed resolution strategy.

\begin{figure}[th]
    \centering 
    \includegraphics[width=0.4\textwidth]{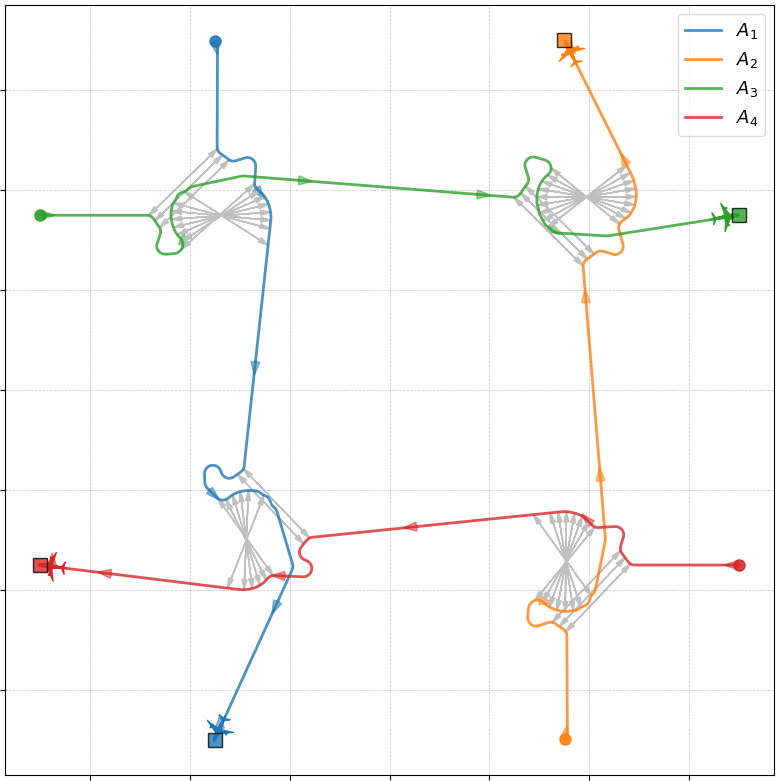}
    \caption{Encounter simulations involving four airplanes. Circle markers indicate initial positions, and square markers denote target points.} 
    \label{fig:unblock_large_scale}\vspace{-5mm}
\end{figure}

\subsection{Monte Carlo simulations}
To evaluate the adaptive priority, we test the proposed strategy in randomly generated encounter scenarios. In these tests, the initial positions of two airplanes are fixed at a distance equal to the safety margin, while their target positions are randomly selected within regions prone to blocking. In 100 randomly generated encounter scenarios, we measure the task completion time under three strategies: maintaining the blocking mode, unblocking with fixed priority, and unblocking with adaptive priority. Compared with maintaining blocking, the fixed-priority strategy reduces average flight time by $16.7\%$, while the adaptive-priority strategy achieves a $21.3\%$ reduction. The benefit of adaptive priority becomes even more pronounced for longer journeys. Undoubtedly, adaptive priority statistically improves the air-traffic efficiency. Importantly, the blocking mode can be resolved quickly in each case, and all safety constraints are maintained throughout the simulations.

\section{Conclusion and Future Work}
Beyond the widely studied deadlock phenomenon, this paper investigates the finite-time blocking phenomenon in a two-airplane encounter scenario. Our analytical results reveal the underlying conditions under which blocking occurs, evaluate its impact, and demonstrate that the blocking condition is much less restrictive than that of deadlock. To address this issue, we develop an intention-aware resolution strategy that enables adaptive decision-making without relying on communication or central coordination, making it well-suited as a fallback in harsh situations. 

While blocking behavior may vary across systems, finite-time blocking is likely to occur in other multi-agent scenarios. This work provides an initial study toward understanding such phenomena. In the future, we will account for more realistic three-dimensional airplane dynamics, rather than a simplified horizontal model. Although multi-airplane encounters are less common, we will study these challenging scenarios, which require decision-making more sophisticated than a simple left/right bypass. Lastly, we will propose a blocking-free controller that incorporates both unblocking and interactive maneuvers without relying on the protocol-based framework.

% \new{However, this exploratory study has the following limitations: First, the analysis is confined to a simplified horizontal model and omits realistic three-dimensional airplane dynamics. Second, although less common, multi-airplane encounters are not considered. Multi-airplane encounters are more challenging because ensuring safety during interactions is difficult, and resolving them requires decision-making more sophisticated than a simple left/right bypass. Lastly, the proposed protocol-based strategy lacks robustness under uncertainty. These challenges should be studied in future work.} 

\normalem
\bibliographystyle{ieeetr}
\bibliography{ref}

\ifArxiv
\appendix

\subsection{Blocking induced by different controllers}\label{ap:vo}
Aside from the CBF-based safety filter, two widely used collision‑avoidance methods, velocity obstacles and potential fields, also produce similar blocking behaviors under symmetry, showing the blocking phenomenon is general in avoiding controllers.

\textit{1) Velocity obstacle} for $A_i$ is the set of all velocities for $A_i$ that will result in a collision with $A_j$ over a time horizon $\tau$, assuming $A_j$ maintains its current velocity:
$$
\mathrm{VO}_{i|j} = \textstyle \left\{ \mathbf{u}_i \mid \exists t \in [0, \tau], \; \|\mathbf{p}_j + \mathbf{u}_j t - (\mathbf{p}_i + \mathbf{u}_i t)\| \le r \right\},
$$
where  $i,j \!=\! \{1,2\}$ and $i\! \neq\! j$. Like the safety filter, velocity command is normally selected to minimize deviation from the preferred velocity~\cite{van2008reciprocal} outside the velocity obstacle, $$ {\operatorname{min}} \, \frac{1}{2}\|\mathbf{u}_i-\mathbf{\tilde{u}}_i \|^2 \, \text{s.t.} \, \mathbf{u}_i \notin \mathrm{VO}_{i|j}  $$
In this case, the same parallel flying phenomenon also happens as shown in Fig.~\ref{fig:blocking_traj}. 

\textit{2) Potential Field} treats each agent as a particle moving in an artificial potential field. The target exerts an attractive force, $F_{att}$, that pulls the robot toward the target, while the other agent exerts a repulsive force, $F_{rep}$, that pushes the agent away. Both forces depend on the inter‑agent distance. The heading direction $\theta_i$ align with the resultant force $F^i = F_{att}^i + F_{rep}^i$. Although this method does not minimize deviation from a preferred velocity, it also exhibits the parallel‑flying behavior under symmetric conditions.

\subsection{Proof of Theorem~\ref{th:sol}}~\label{ap:thm_1}
\begin{figure}[htb]
    \centering \vspace{-3mm}
    \includegraphics[width=0.27\textwidth]{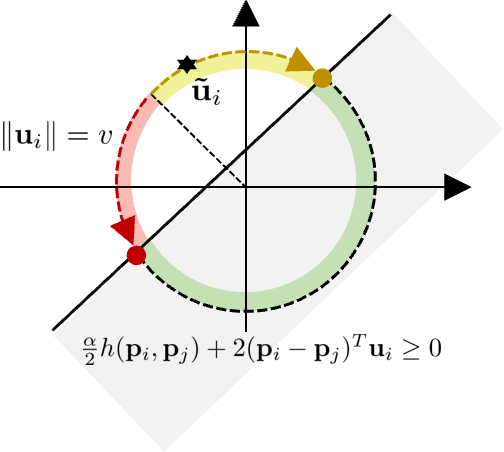}
    \caption{Geometric interpretation of the safety filter~\eqref{eq:sf}.} \vspace{-3mm}
    \label{fig:geo_int} 
\end{figure}

Fig.~\ref{fig:geo_int} provides a geometric interpretation of the safety filter in Eq.~\eqref{eq:sf}. The dashed circle represents the constant-speed constraint from Eq.~\eqref{sf:norm}, whereas the shaded half-plane depicts the CBF inequality in Eq.~\eqref{sf:cbf}. The green arc represents the set of safe inputs, while the yellow and red arcs denote the unsafe sets. The objective in Eq.~\eqref{sf:obj} selects a safe input closest to the cruising input. Hence, if the cruising input already lies on the green arc, it is accepted unchanged; otherwise, the filter corrects it into the nearest safe point. Specifically, a cruising input on the yellow arc is shifted to the adjacent yellow circle marker, whereas one on the red arc is moved to the red circle marker.

To numerically analyze the optimization problem in Eq.~\eqref{eq:sf}, it is better to use 
the parametric approach of $\mathbf{u}_i\!=\!v[\text{cos}(\theta_i), \text{sin}(\theta_i)]^T$ and $\mathbf{\tilde{u}}_i\!=\!v [\text{cos}(\phi_i), \text{sin}(\phi_i)]^T$, which  implicitly satisfy the norm constraint in Eq.~\eqref{sf:norm}. The original objective in Eq.~\eqref{sf:obj} can be derived as,
\begin{align*}
    \frac{1}{2}\|\mathbf{u}_i-\mathbf{\tilde{u}}_i \|^2 & \textstyle = 
    \frac{1}{2} \left\|\begin{bmatrix}
        \text{cos}(\theta_i) -  \text{cos}(\phi_i) \\
        \text{sin}(\theta_i) -  \text{sin}(\phi_i)
    \end{bmatrix} \right\|^2 \\
    & = 1 - cos(\theta_i - \phi_i)
\end{align*}
The CBF condition in Eq.~\eqref{sf:cbf} can be derived into, 
\begin{align*}
     & \textstyle \frac{\alpha}{2} h(\mathbf{p}_i, \mathbf{p}_j) - 2v \|\mathbf{p}_i - \mathbf{p}_j\| 
    \begin{bmatrix}
        \text{cos}(\beta_i^j) \\ \text{sin}(\beta_i^j)
    \end{bmatrix}^T
    \begin{bmatrix}
    \text{cos}(\theta_i) \\ \text{sin}(\theta_i)
    \end{bmatrix}  \geq 0 \\
    & \textstyle \Rightarrow  cos(\theta_i-\beta_i^j) \leq \frac{\alpha h(\mathbf{p}_i, \mathbf{p}_j)}{4v \|\mathbf{p}_1 - \mathbf{p}_2\|}
\end{align*}
Let $L$ denote the right-hand term of the inequality function. Since $\alpha$ and $v$ are positive constant numbers, $L \geq 0$ if the current state is in the safe set. Given the above derivation, the optimization problem in Eq.~\eqref{eq:sf} can be transformed into the formulation from the angular perspective, 
\begin{equation}
    \begin{aligned}
     & \min_{\theta \in (-\pi, \pi]}  &&  1 - cos(\theta_i - \phi_i) \\
    & \quad \text{s.t.} &&  cos(\theta_i -\beta_i^j) \leq L
    \end{aligned}
    \label{eq:sf_2}
\end{equation}
Without the loss of generality, we assume $\beta_i^j, \theta_i, \phi_i \in (-\pi, \pi]$. With the angular normalization operator $\measuredangle(\cdot) \in (-\pi, \pi]$, the optimization can be further transformed into,
\begin{equation}
    \begin{aligned}
     & \min_{\theta_i \in (-\pi, \pi]}  && |\measuredangle(\theta_i - \phi_i)| \\
    & \quad \text{s.t.} && |\measuredangle(\theta_i - \beta_i^j)| \in  [\Delta, \pi]
    \end{aligned}
    \label{eq:sf_3}
\end{equation}
where $\Delta = \operatorname{cos}^{-1}(\min(1, L) \in [0, \frac{\pi}{2}]$. With the reformulated optimization problem from an angular perspective, the explicit solution can be easily obtained in the form of piecewise functions, as shown in Eq.~\eqref{eq:sol}.

\subsection{Proof of Theorem~\ref{th:block}}\label{ap:thm_2}
For clarity, the analysis below uses $A_1$ and $A_2$ instead of the generic pair $A_i$ and $A_j$, where $i,j\in{1,2}, i\neq j$.

\textbf{Sufficiency}: As for condition (1), since $s \measuredangle(\phi_1-\beta_1^2) \in  [0, \Delta)$ and $-s\measuredangle(\phi_2-\beta_2^1) \in [0, \Delta)$, the heading directions corrected by safety filters are $\theta_1\!=\! \measuredangle(\beta_1^2 +s \Delta)$ and $\theta_2\!=\! \measuredangle(\beta_2^1 -s \Delta)$. Since $\measuredangle(\beta_1^2 - \beta_2^1)\!=\! \pi$, 
$$
\begin{aligned}
\mathbf{u}_2 - \mathbf{u}_1 & \textstyle = \begin{bmatrix}
    \text{cos}(\beta_1^2 + s \Delta) - \text{cos}(\beta_2^1 -s \Delta) \\
    \text{sin}(\beta_1^2 +s \Delta) - \text{sin}(\beta_2^1 -s \Delta)
\end{bmatrix} \\ & = 2 \text{cos}(\Delta) \begin{bmatrix}
    \text{cos}(\beta_1^2) \\
    \text{sin}(\beta_1^2)
\end{bmatrix} \\
& = 2 \frac{\text{cos}(\Delta)}{d_{1,2}} (\mathbf{p}_2-\mathbf{p}_1)
\end{aligned}.
$$
Thus, $(\mathbf{u}_2 - \mathbf{u}_1)$ is collinear to  $(\mathbf{p}_2 - \mathbf{p}_1)$ such that $\dot{\beta}_i^j\!=\! 0$. Therefore, $s\measuredangle(\phi_1-\beta_1^2) \in [0, \Delta)$ and $-s\measuredangle(\phi_2-\beta_2^1) \in [0, \Delta)$ is a sufficient condition for both airplanes in blocking mode. As for condition (2),  $s\measuredangle(\phi_1-\beta_1^2) \in [0, \Delta)$ implies that $\theta_1\!=\! \measuredangle(\beta_1^2 + s \Delta)$. Given $\phi_2\!=\! \measuredangle(\beta_1^2 \pm \Delta)$, $\mathbf{u}_1 - \mathbf{u}_2\!=\! 0$ such that $\dot{\beta}_i^j\!=\! 0$. In addition, since $\measuredangle(\beta_1^2 - \beta_2^1)\!=\! \pi$, $|\measuredangle(\phi_2^1-\beta_2^1)| \notin [0, \Delta)$, which means $A_2$ is in cruising mode. Therefore, the sufficiency of condition (2) is verified. 

\textbf{Necessity}: Suppose $A_1$ is in blocking mode such that $\theta_1\!=\! \measuredangle(\beta_1^2 + s \Delta)$. To make $\dot{\beta}_1^2\!=\!\dot{\beta}_2^1\!=\! 0$, the following equality should be satisfied,
$$ \textstyle \begin{aligned}
    & (\mathbf{p}_2\!-\! \mathbf{p}_1) \times (\mathbf{u}_2\!-\! \mathbf{u}_1) \\
    = & (\mathbf{p}_2\!-\! \mathbf{p}_1) \times \mathbf{u}_2\!-\! (\mathbf{p}_2\!-\! \mathbf{p}_1) \times \mathbf{u}_1 \\
    = & v \|\mathbf{p}_2\!-\! \mathbf{p}_1\| \text{sin}(\beta_1^2-\theta_2) \vec{\mathbf{n}}\!-\! \|\mathbf{p}_2\!-\! v \mathbf{p}_1\| \text{sin}(\beta_1^2-\theta_1) \vec{\mathbf{n}} \\ = &0
\end{aligned},
$$
where $\vec{\mathbf{n}}$ denotes a unit vector perpendicular to the horizontal plane. Therefore, 
$\text{sin}(\beta_1^2-\theta_2)\!=\! \text{sin}(\beta_1^2\!-\! \theta_1)$ to ensure $\dot{\beta}_i^j\!=\! 0$. Given $\theta_1\!=\! \measuredangle(\beta_1^2 + s \Delta)$ and $\measuredangle(\beta_1^2\!-\! \beta_2^1)\!=\! \pi$, $\text{sin}(\beta_1^2-\theta_2)\!=\! \text{sin}(\beta_1^2\!-\! \theta_1)\!=\! \text{sin}(-s \Delta)$. Therefore, $\theta_2\!=\! \measuredangle(\beta_1^2 + s \Delta)$ or $\theta_2\!=\! \measuredangle(\beta_1^2 - \pi - s\Delta)\!=\! \measuredangle(\beta_2^1 - s\Delta)$. 
\begin{itemize}
    \item[1)] Suppose $A_2$ is in blocking mode such that $\|\measuredangle(\phi_2 - \beta_2^1)\| \in [0, \Delta)$. To make $\dot{\beta}_i^j\!=\! 0$, $\theta_2\!=\! \measuredangle(\beta_2 - s\Delta)$ such that $ -s\measuredangle(\phi_2-\beta_2^1) \in [0, \Delta)$. 
    \item[2)] Suppose $A_2$ is not in blocking mode such that $\|\measuredangle(\phi_2 - \beta_2^1)\| \notin [0, \Delta)$. To make $\dot{\beta}_i^j\!=\! 0$,$\theta_2\!=\! \measuredangle(\beta_1^2 + s \Delta)\!=\! \theta_1$ such that $\phi_2\!=\! \measuredangle(\beta_1^2 + s \Delta)$.
\end{itemize}
Therefore, the necessity of conditions (1) and (2) is also verified.

\subsection{Proof of Corolloary~\ref{th:block_bf}}\label{ap:cor_1}

Under the conditions $\measuredangle(\phi_1-\beta_1^2)\!=\! -\measuredangle(\phi_2-\beta_2^1)$
and $\|\mathbf{p}_2 - \mathbf{p}_c\|\!=\! \|\mathbf{p}_1 - \mathbf{p}_c\|$, the two airplanes in cruising mode approach the encounter point $\mathbf{p}_c$ symmetrically with respect to their vertical midline. In this situation, they will activate their safety filters at the same time. Subsequently, they must enter the blocking mode because their cruising angles fall into the correction region on the same side; that is, $\exists s \in \{-1, +1\}$, $s\measuredangle(\phi_1-\beta_1^2) \in [0, \Delta)$ and $-s\measuredangle(\phi_2-\beta_2^1) \in  [0, \Delta)$.

\begin{figure*}[htb!]
  \centering
  \begin{subfigure}[b]{0.33\textwidth}
    \centering
    \includegraphics[height=3cm]{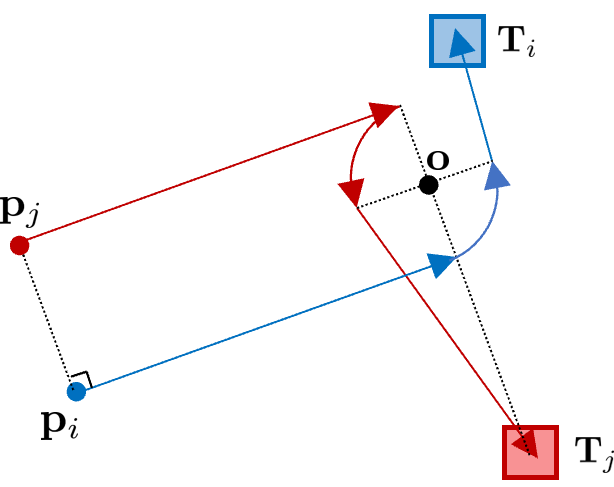}
    \caption{Maintain blocking}
    \label{fig:subfig1}
  \end{subfigure}\hfill
  \begin{subfigure}[b]{0.33\textwidth}
    \centering
    \includegraphics[height=3cm]{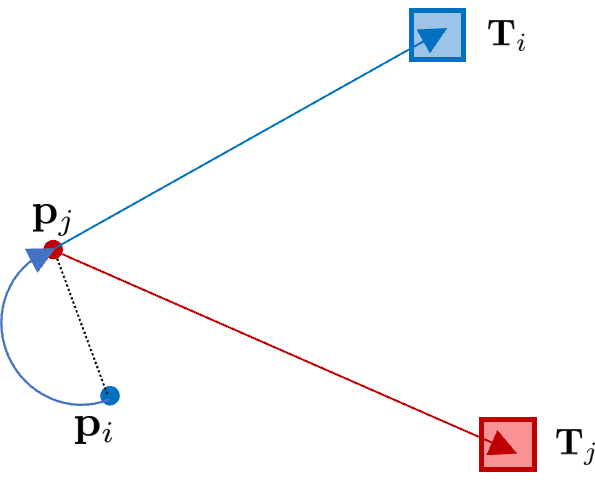}
    \caption{Unblocking by $A_i$}
    \label{fig:subfig2}
  \end{subfigure}\hfill
  \begin{subfigure}[b]{0.33\textwidth}
    \centering
    \includegraphics[height=3cm]{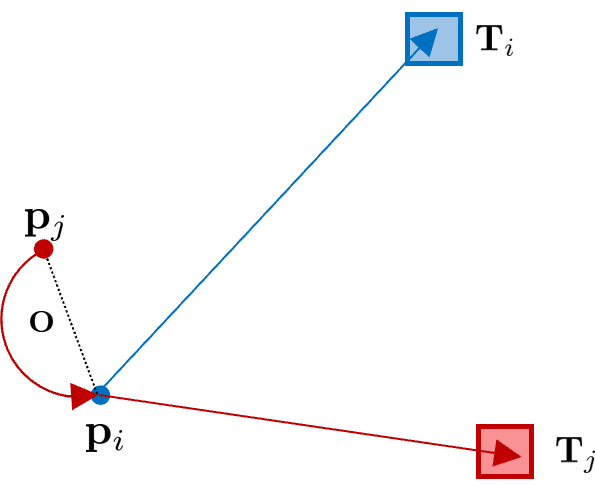}
    \caption{Unblocking by $A_j$}
    \label{fig:subfig3}
  \end{subfigure}\hfill
  \caption{Duration approximation for three options.}
  \label{fig:time_cost}\vspace{-5mm}
\end{figure*}

\subsection{Proof of Corolloary~\ref{th:unblock}}\label{ap:cor_2}

Without the loss of generality, assume $A_1$ is in blocking mode such that $\|\beta_1^2 - \phi_1\| \in [0, \Delta)$ and $\dot{\beta}_1^2\!=\!0$.
Since $\dot{\beta}_1^2\!=\!0$, $\textstyle \frac{d}{dt} \left(\textstyle \frac{\mathbf{p}_2 - \mathbf{p}_1}{\|\mathbf{p}_2 - \mathbf{p}_1\|}\right)\!=\! 0$. Given $\textstyle \text{cos}(\beta_1^2 - \phi_1)\!=\! \textstyle 
\frac{(\mathbf{p}_2 - \mathbf{p}_1)^T (T_1 - \mathbf{p}_1)}{\|\mathbf{p}_2 - \mathbf{p}_1\| \|T_1 - \mathbf{p}_1\|}
$, the derivative of $\text{cos}(\beta_1^2 - \phi_1)$ with respect to $t$ is calculated as follows,
\begin{equation}
\begin{aligned}
    & \textstyle \frac{d}{dt}\text{cos}(\beta_1^2 - \phi_1) \\
    = & \textstyle \frac{d}{dt} \left(\frac{\mathbf{p}_2 - \mathbf{p}_1}{\|\mathbf{p}_2 - \mathbf{p}_1\|}\right)^T \frac{T_1 - \mathbf{p}_1}{\|T_1 - \mathbf{p}_1\|} 
     + \frac{(\mathbf{p}_2 - \mathbf{p}_1)^T}{\|\mathbf{p}_2 - \mathbf{p}_1\|} \frac{d}{dt} \left(\frac{T_1 - \mathbf{p}_1}{\|T_1 - \mathbf{p}_1\|} \right) \\
    = & \textstyle \frac{(\mathbf{p}_2 - \mathbf{p}_1)^T}{\|\mathbf{p}_2 - \mathbf{p}_1\|} \cdot \frac{-\mathbf{u}_1 \|T_1 - \mathbf{p}_1\|^2 + (T_1 - \mathbf{p}_1) (T_1 - \mathbf{p}_1)^T \mathbf{u}_1}{\|\mathbf{T}_1 - \mathbf{p}_1\|^3}\\
    = & \textstyle \frac{-v \text{cos}(\beta_1^2-\theta_1) + v\text{cos}(\beta_1^2-\phi_1) \text{cos}(\phi_1-\theta_1)}{\|\mathbf{T}_1 - \mathbf{p}_1\|}\\
    = & \textstyle \frac{v \text{sin}(\beta_1^2 - \phi_1) \text{sin}(\phi_1 - \theta_1) }{\|T_1 - \mathbf{p}_1\|}.
\end{aligned}
\end{equation}
The above derivation used following two formulas: $\frac{d}{dt}(\mathbf{a}^T \mathbf{b})\!=\! (\frac{d}{dt}\mathbf{a})^T \mathbf{b}\!+\! \mathbf{a}^T (\frac{d}{dt} \mathbf{b})$ and $\frac{d}{dt}(\frac{1}{\|\mathbf{b}\|})\! =\! -\frac{\mathbf{b}^T}{\|\mathbf{b}\|^3}\frac{d}{dt} (\mathbf{b})$, where $\mathbf{a}$ and $\mathbf{b}$ are vectors w.r.t. $t$.
In term of the explicit solution in Theorem~\ref{th:sol}, $\|\measuredangle(\beta_1^2 - \phi_1)\|, \|\measuredangle(\phi_1-\theta_1)\| \in [0, \frac{\pi}{2}]$ and $\text{sin}(\beta_1^2 - \phi_1)\text{sin}(\phi_1-\theta_1) \geq 0$ such that $\frac{d}{dt}\text{cos}(\beta_i^j - \phi_i) \geq 0$. Since $A_1$ is in a blocking mode, $\phi_1 \neq \theta_1$ such that $\frac{d}{dt}\text{cos}(\beta_i^j - \phi_i)\!=\! 0$ if and only if $\beta_1^2\!=\! \phi_1$. Since $\|\measuredangle(\beta_1-\theta_1)\| \in [0, \frac{\pi}{2}]$, $\| \measuredangle(\beta_1^2 - \phi_1)\|$ will decrease in the blocking mode. Therefore, it is demonstrated that the desired cruising angle $\phi_i$ converges to the bearing angle $\beta_i^j$ when $A_i\!\in\!\mathcal{A}$ is in blocking mode. 

Assuming $A_i\!\in\!\mathcal{A}$ is in the blocking mode, $\theta_i = \beta_i^j + s \Delta$ for $s = \{-1, 1\}$. As demonstrated above, $\phi_i$ converges to $\beta_i^j$. Once $\phi_i = \beta_i^j$, $A_i$ will switch to the opposite correction, i.e., $\theta_i = \beta_i^j - s \Delta$. However, $\phi_j \neq \beta_j^i$ implies that $A_j$ will keep the previous direction. After that, $\dot{\beta}_1^2 \neq 0$ after $\phi_i \!=\! \beta_i$ such that the blocking is resolved. Thus, Corollary~\ref{th:unblock} is proved.

\subsection{Calculating blocking durations}~\label{ap:time}
\vspace{-2mm}
\begin{figure}[htp]
    \centering 
    \includegraphics[width=0.3\textwidth]{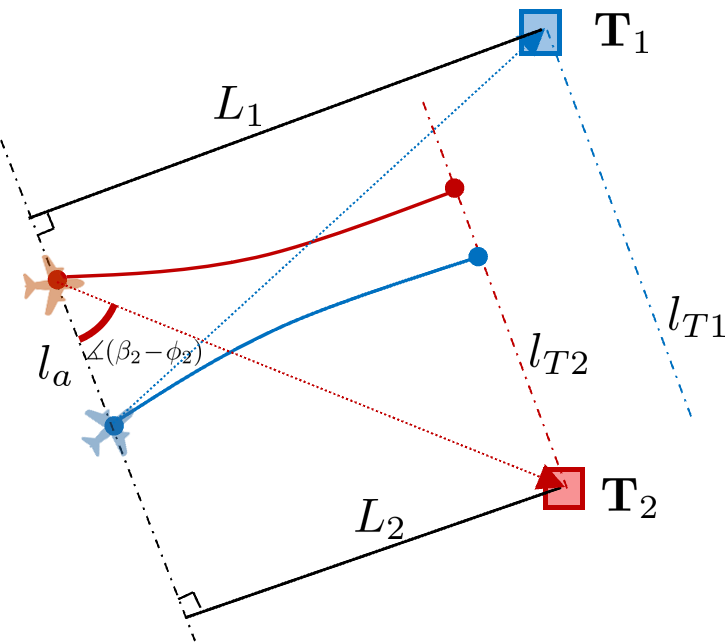}
    \caption{Geometric sketch of the blocking duration. Lines $l_a$, $l_{T1}$ and  $l_{T2}$ are parallel. }\vspace{-2mm}
    \label{fig:time_block}
\end{figure}

Colorary~\ref{th:unblock} implies that the current blocking mode will be resolved autonomously once the two airplanes become collinear with either target. As the geometric relationship shown in Fig.~\ref{fig:time_block}, the blocking duration relies on the earliest collinearity of $l_a$ with either  $L_{T_1}$ or $L_{T_2}$. Therefore, $T\!>\!\frac{\text{min}(L_1, L_2)}{v}$. As shown in the figure, $L_i\!=\!\|\mathbf{p}_i-\mathbf{T}_i\| \text{cos}(\beta_i^j-\phi_i)$ for $i\!\in\!\{1,2\}$. During the blocking, the distance of airplanes is converging to $r$. In terms of the triangle inequality theorem, the actual path is less than $\text{min}(L_1, L_2) + \|\mathbf{p}_1 - \mathbf{p}_2\| -r$. Therefore, $T < \frac{\text{min}(L_1, L_2) + \|\mathbf{p}_1 - \mathbf{p}_2\| -r}{v}$. To sum up, the lower and upper bound of the blocking duration are calculated to assess the impact of the blocking mode.

\subsection{Duration estimation for adaptive priority}~\label{ap:priority}
To determine the optimal unblocking priority, the durations of three options are evaluated for the ego airplane $A_i$. Based on the geometric relationship in Fig.~\ref{fig:time_cost}, the durations for the three options can be approximated as follows:
\begin{equation}\label{eq:time_est}
\begin{aligned}
T_{b}   \textstyle \approx \textstyle \frac{1}{v}[&2\min(\|\mathbf{p}_i\!-\!\mathbf{T}_i\| \text{cos}(\beta_i^j-\phi_i), \|\mathbf{p}_j\!-\!\mathbf{T}_j\| \text{cos}(\beta_j^i\!-\!\phi_j))\\
 & \textstyle + \sqrt{\|\mathbf{T}_i - \mathbf{o}\| - r^2} + \sqrt{\|\mathbf{T}_j - \mathbf{o}\|\!-\!r^2} + \pi r\,], \\ 
T_{u}^i   \textstyle \approx  \frac{1}{v}[ & \|\mathbf{T}_i - \mathbf{p}_j\| + \|\mathbf{T}_j - \mathbf{p}_j\|  + \pi r\,], \\ 
T_{u}^j   \textstyle \approx  \frac{1}{v}[ & \|\mathbf{T}_i - \mathbf{p}_i\| + \|\mathbf{T}_j - \mathbf{p}_i\|  + \pi r\,]
\end{aligned}
\end{equation}
where $\mathbf{o} = \frac{\mathbf{p}_1 + \mathbf{p}_2}{2}$. The adaptive priority scheme compares the approximate durations of three decisions and selects the optimal one for the blocking resolution strategy. Although the duration estimations are approximate and the final decision may not be optimal, the adaptive priority based on this duration estimation can statistically improve the utility.

\fi

\end{document}